\crefname{hypothesis}{Hypothesis}{Hypotheses}
\crefname{ALC@unique}{Line}{Lines}
\newcommand{\ddd}{\scriptsize{\textnormal{d}}}
\newcommand{\dd}{\textnormal{d}}
\newcommand*{\Glp}[1][$n$]{\text{GL}^+(#1)}
\newcommand{\SO}{\textnormal{SO}}
\newcommand{\E}{\textnormal{E}}
\newcommand{\SE}{\textnormal{SE}}
\newcommand{\Exp}{\textnormal{Exp}}
\newcommand{\Log}{\textnormal{Log}}
\newcommand{\Ad}{\textnormal{Ad}}
\newcommand{\inv}{\textnormal{inv}}
\newcommand\blfootnote[1]{
    \noindent
    \begingroup
    \renewcommand\thefootnote{}\footnote{#1}%
    \addtocounter{footnote}{-1}%
    \endgroup
}
\DeclareRobustCommand{\ShowColormap}{\raisebox{-0.14em}{\includegraphics[height=.8em]{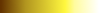}}}
\colorlet{texcscolor}{blue!50!black}
\colorlet{texemcolor}{red!70!black}
\colorlet{texpreamble}{red!70!black}
\colorlet{codebackground}{black!25!white!25}
\lstdefinestyle{siamlatex}{%
  style=tcblatex,
  texcsstyle=*\color{texcscolor},
  texcsstyle=[2]\color{texemcolor},
  keywordstyle=[2]\color{texemcolor},
  moretexcs={cref,Cref,maketitle,mathcal,text,headers,email,url},
}
\DeclareTotalTCBox{\code}{ v O{} }
{ 
  fontupper=\ttfamily\color{black},
  nobeforeafter,
  tcbox raise base,
  colback=codebackground,colframe=white,
  top=0pt,bottom=0pt,left=0mm,right=0mm,
  leftrule=0pt,rightrule=0pt,toprule=0mm,bottomrule=0mm,
  boxsep=0.5mm,
  #2}{#1}
\patchcmd\newpage{\vfil}{}{}{}
\title{Bi-invariant Dissimilarity Measures for Sample Distributions in Lie Groups\thanks{This work is an extension of the conference paper~\cite{HanikHegevonTycowicz2020}. Submitted to the editors April 6, 2021.
\funding{M. Hanik is funded by the Deutsche Forschungsgemeinschaft (DFG, German Research Foundation) under Germany’s Excellence Strategy – The Berlin Mathematics Research Center MATH+ (EXC-2046/1, project ID: 390685689). This work was supported by the Bundesministerium fuer Bildung und Forschung (BMBF) through BIFOLD - The Berlin Institute for the Foundations of Learning and Data (ref. 01IS18025A and ref 01IS18037A).\\[-2ex]}}}
\author{Martin Hanik\thanks{Zuse Institute Berlin, Takustr.\ 7, 14195 Berlin, Germany (\email{hanik@zib.de}, \email{hege@zib.de}).}
\and Hans-Christian Hege\footnotemark[2]
\and Christoph von Tycowicz\thanks{Freie Universität Berlin, Kaiserswerther Str.\ 16-18, 14195 Berlin, Germany (\email{vontycowicz@zib.de}).} }
\begin{document}
\maketitle

\begin{tcbverbatimwrite}{tmp_\jobname_abstract.tex}
\begin{abstract}

    Data sets sampled in Lie groups are widespread, and as with multivariate data, it is important for many applications to assess the differences between the sets in terms of their distributions.
    Indices for this task are usually derived by considering the Lie group as a \emph{Riemannian} manifold.
    Then, however, compatibility with the group operation is guaranteed only if a bi-invariant metric exists, which is not the case for most non-compact and non-commutative groups.
    We show here that if one considers an affine connection structure instead, one obtains bi-invariant generalizations of well-known dissimilarity measures:
    a \emph{Hotelling $T^2$ statistic}, \emph{Bhattacharyya distance} and \emph{Hellinger distance}.
    Each of the dissimilarity measures matches its multivariate counterpart for Euclidean data and is translation-invariant, so that biases, e.g., through an arbitrary choice of reference, are avoided.
    We further derive non-parametric two-sample tests that are bi-invariant and consistent.
    We demonstrate the potential of these dissimilarity measures by performing group tests on data of knee configurations and epidemiological shape data. Significant differences are revealed in both cases.
\end{abstract}

\begin{keywords}
	non-metric statistics, two-sample test, shape analysis, Lie group
	\blfootnote{{\\ Data used in this article were obtained from the Alzheimer’s Disease Neuroimaging Initiative (ADNI) database (adni.loni.usc.edu). Investigators within the ADNI contributed to the design and implementation of the ADNI and/or provided data, but were not involved in the analysis or writing of this report. For a complete list of ADNI investigators see \url{http://adni.loni.usc.edu/wp-content/uploads/how_to_apply/ADNI_Acknowledgement_List.pdf}.}}
\end{keywords}

\begin{AMS}
    62R30, 22E70, 53Z50
\end{AMS}
\end{tcbverbatimwrite}
\input{tmp_\jobname_abstract.tex}

\section{Introduction}
        Manifold-valued data occur in many applications \cite{Chirikjian2012,pennec2006intrinsic, peyre2009manifold,rahman2005multiscale} and it can often be conceived as elements of a Lie group. Examples include representations of skeletal systems (e.g., in robotics~\cite{ParkBobrowPloen1995}), inferences of anatomical structures by evaluating diffusion tensor fields in medicine~\cite{Arsigny_ea2006,arsigny2007geometric,bacak2016second,pennec2019riemannian}, position- and motion-independent recognition of objects in computer vision~\cite{huang2017deep,veeriah2015differential,vemulapalli2014human,vemulapalli2016rolling}, or analysis of covariance matrices (e.g., in feature-based image analysis~\cite{tuzel2006region}). The algorithmic tasks involved in Lie group-based computations are correspondingly diverse: They range from computation of geometric means \cite{moakher2002means, moakher2005differential}, to function approximation and regression \cite{Hanik_ea2020,samir2012gradient,wallner2007smoothness}, numerical solution of differential equations \cite{iserles2005lie}, numerical minimization \cite{taylor1994minimization}, signal processing \cite{barbaresco2008innovative,Barbaresco2020,Cesic_ea2016,fiori2009algorithm}, image analysis~\cite{bekkers2018roto,duits2007scale}, as well as computer vision~\cite{calabi1998differential,olver1994differential,pai2020geometric, porikli2006covariance,tosato2010multi,trouve1998diffeomorphisms}.
    
    Another area where Lie-group-valued measurements are regularly performed is shape analysis: The idea is to represent shapes of objects as deformations of a common reference, as introduced by D'Arcy Thompson over 100 years ago~\cite{Thompson1992}; the deformations are usually elements of a Lie group.
    For example, in~\cite{Adler_ea2002,Boisvert_ea2008}, configurations of the human spine are encoded in a product group consisting of translations and rotations. Classical matrix groups are also used in physically motivated shape spaces~\cite{AmbellanZachowTycowicz2021,vonTycowicz_ea2018} as well as in the characterization of volume~\cite{Woods2003} and surface~\cite{AmbellanZachowvonTycowicz2019_GL3} deformations.
    Important algorithmic tasks are matching, analysis and statistics of shapes \cite{fletcher2003statistics,huckemann2010intrinsic,srivastava2016functional, tabia2014covariance,trouve2005metamorphoses}.
    
    For data analysis in Lie groups, it is necessary to generalize methods of multivariate statistics to the Lie group setting.
    Keeping in mind that Lie groups possess symmetries since they act on themselves via translations, it is desirable to look for generalizations that respect these symmetries. 
    The task, then, is to derive methods that are invariant/equivariant under in-group translations (from both left and right).
    This property is not only an established theoretical criterion for selection of statistical methods~\cite{HallinJurevkova2014}, but it also has practical advantages: 
    In shape analysis, for example, it avoids a bias due to the choice of reference and, at the same time, different data configurations~\cite{Adler_ea2002,AmbellanZachowvonTycowicz2019_GL3,Woods2003}. In machine learning applications, equivariant convolutional networks have been shown to perform very well on Lie groups~\cite{CohenTacoWelling2016,Finzi_ea2020}. 
    
    Although geometrically-defined statistical methods from Riemannian geometry have long been used, they respect the symmetries only if a bi-invariant Riemannian metric exists on the Lie group at hand. However, this is often not the case: particularly relevant examples for applications are the group of rigid body transformations and the general linear group in dimensions greater than one. 
    
    To overcome the problems of the Riemannian approach, Pennec and Arsigny generalized the notions of mean, covariance and Mahalanobis distance to Lie groups using a (non-metric) affine structure~\cite{PennecArsigny2013}. The quantities defined in this way have the desired invariance/equivariance properties under translations. We build on 
    this work to derive bi-invariant (i.e., invariant under translation from both left and right) generalizations of the Hotelling $T^2$ statistic and Bhattacharyya distance for data in Lie groups that follow a generalized normal distribution.
    Both are well-known indices from multivariate statistics and are used to measure dissimilarities between two probability distributions. As such they have found numerous applications, for example, in statistical hypothesis testing~\cite{Pardo2005}, feature extraction~\cite{ChoiLee2003}, and image processing~\cite{GoudailRefregierDelyon2004}. In contrast to generalizations based on Riemannian structures~\cite{Hong2015,MuralidharanFletcher2012}, the proposed quantities are not only bi-invariant, but also reduce to the multivariate formulations in the case of Euclidean spaces.
    To illustrate the potential, we use the newly defined notions to construct hypothesis tests for two prospective applications. First, we detect differences in the configuration of human knees under osteoarthritis when compared to healthy controls. 
    Second, we test for differences in shape of the right hippocampus of patients in early Alzheimer's disease; thereby, both a local and a global test confirm effects known from the literature. 
    Python implementations of our proposed methods, which were also used for the tests, are available online as part of the Morphomatics library~\cite{Morphomatics}.
    
    This article is based in part on the workshop paper~\cite{HanikHegevonTycowicz2020}, but includes both theoretical and experimental extensions: We provide a rigorous mathematical formulation of the underlying concepts, including new discussions of the symmetry properties of the proposed indices under an exchange of the data sets and on the behavior of the indices under inversion of the data. Furthermore, we derive the connection between the bi-invariant Bhattacharyya distance and its version for densities and propose a bi-invariant Hellinger distance. Finally, we extend the previous experiments to include a global hypothesis test for the equality of shape distributions and a local test based on the Bhattacharyya distance and perform a new hypothesis test for differences in the SE(3)-valued configuration of the knee under osteoarthritis.

\section{Theoretical Background}
    Let $(p_1,\dots,p_m)$ and $(q_1,\dots,q_n )$ be two data sets in $\mathbb{R}^d$, each with i.i.d.\ normally distributed elements, and let $\overline{p}$ and $\overline{q}$ denote the respective sample means. Assuming homoscedasticity (i.e., that both distributions share the same covariance matrix) the data's pooled sample covariance is given by
    \begin{equation} \label{eq:Euclidean_pooled_covariance}
        \widehat{S} = \frac{1}{m + n - 2} \left( \sum_{i=1}^m (p_i - \overline{p}) (p_i - \overline{p})^T + \sum_{j=1}^n (q_j - \overline{q}) (q_j - \overline{q})^T \right).
    \end{equation}
    Since the covariances are the same, $\widehat{S}$ is an unbiased estimator of the pooled covariance.
    The multivariate Hotelling $T^2$ statistic is then defined as the square of the Mahalanobis distance scaled by $mn / (m + n)$:
    \begin{equation} \label{eq:Euclidean_test}
        t^2 \big((p_i),(q_i) \big) = \frac{mn}{m + n} (\overline{p} - \overline{q})^T \widehat{S}^{-1} (\overline{p} - \overline{q}).
    \end{equation}
    It measures the difference of $\overline{p}$ and $ \overline{q}$ weighted by the inverse of the pooled covariance. Therefore, directions in which high variability was observed are weighted less than those with little spreading around the corresponding components of the mean. 
    
    If additionally the variances of both distributions differ, the Bhattacharyya distance~\cite{Kailath1967} 
    is a suggested index for assessing the dissimilarity between them. Denoting the matrix determinant by $\det$, the respective sample covariance matrices by $S_{p_i}$ and $S_{q_i}$, and setting $\bar{S}=(S_{p_i}+S_{q_i})/2$, it is defined by 
    \begin{equation} \label{eq:multivariate_DB}
    \textnormal{D}_B \big((p_i),(q_i) \big) = \frac{1}{8}(\overline{p}-\overline{q})^T\bar{S}^{-1}(\overline{p}-\overline{q}) + \frac{1}{2}\ln\left(\frac{\det(\bar{S})}{\sqrt{\det(S_{p_i}) \det(S_{q_i})}}\right).
    \end{equation}
    Note that the first summand (without the constant factor) differs from Hotelling's $T^2$ statistic only by the use of the averaged covariance matrix instead of the pooled one.
    
    A fundamental property of both the Hotelling's $T^2$ statistic and the Bhattacharyya distance is invariance under data translations (e.g., $t^2\big((p_i+v),(q_i+v)\big) = t^2 \big((p_i),(q_i) \big)$ for all $v \in \mathbb{R}^d$). It is thus highly desirable that generalizations of these quantities also exhibit this invariance property.
    We achieve this for Lie groups by considering them as affine manifolds. 
    
    \subsection{Affine Manifolds and Lie Groups}
    
    In the following, we summarize the relevant facts of affine manifolds and (finite-dimensional) Lie group theory; we also explain how the Riemannian is integrated in the affine setting. For more information on Lie groups see~\cite{Helgason2001} and \cite{Postnikov2013}; the latter is also a good reference on affine manifolds. Additional information on differential geometry can, for example, be found in~\cite{doCarmo1992}. In the following we use ``smooth'' synonymously with ``infinitely  often differentiable''. 
    
    Let $M$ be a smooth manifold. We denote the tangent space at $p \in M$ by $T_pM$, and the sets of smooth functions and vector fields on $M$ by $\mathscr{C}^{\infty}(M)$ and $\Gamma(TM)$, respectively. If $X \in \Gamma(TM)$, then $X_p$ denotes its value in $T_pM$.
    Given $X,Y \in \Gamma(TM)$, an affine connection $\nabla$ yields a way to differentiate $Y$ along $X$; the result $\nabla_X Y \in \Gamma(TM)$ being again a vector field. 
    If $M$ is \textit{additionally} endowed with a Riemannian metric (i.e., a smoothly varying inner product on the tangent spaces), then (for every such metric) there is a unique affine connection called Levi-Civita connection. 
    Whenever we speak of a \textit{Riemannian} manifold in the following, we mean a smooth manifold that is endowed with the Levi-Civita connection of some arbitrary Riemannian metric.
    
    With $\gamma' = \frac{\ddd \gamma}{\ddd t}$, we can define a geodesic $\gamma: [0,1] \to M$ by $\nabla_{\gamma'} \gamma' = 0$ as a curve without acceleration.
    An important fact is that every point $p \in M$ has a so-called normal convex neighborhood $U$. Each pair $q, r \in U$ can be joined by a unique geodesic $[0,1] \ni t \mapsto \gamma(t;q,r)$ that lies completely in $U$. 
    Furthermore, with $\gamma'(0;p,q) = v$, this defines the \textit{exponential} $\Exp_p: T_pM \to M$ \textit{of the connection} $\nabla$ at $p$ by 
    $$\Exp_p(v) = \gamma(1;p,q).$$ 
    It is a local diffeomorphism with local inverse 
    $$\Log_p(q) = \gamma'(0;p,q).$$ 
    In the case of Riemannian manifolds, $\Exp$ and $\Log$ are called \textit{Riemannian exponential} and \textit{logarithm}, respectively. 
    \bigskip

    A Lie group $G$ is a smooth manifold that (in addition) has a compatible group structure; that is, there is a smooth (not necessarily commutative) group operation $G \times G \ni (g,h) \mapsto gh \in G$ with corresponding identity element $e \in G$ such that the inversion map $g \mapsto g^{-1}$ is also smooth. 
    Vector spaces (with addition) are instances of Lie groups.
    Another example is the general linear group GL($n$), that is, the set of all bijective linear mappings on an $n$-dimensional vector space $V$, where the group operation is the composition of mappings (i.e., a matrix multiplication), with $e$ being the identity map.
    Whenever we speak of matrix groups in the following, arbitrary closed subgroups of GL($n$) are meant.

    For each $g \in G$ the group operation defines two automorphisms on $G$: the \emph{left} and \emph{right translation} $L_g: h \mapsto gh$ and $R_g: h \mapsto hg$. 
    Their derivatives $d_hL_g$  and $d_hR_g$ at $h \in G$ map tangent vectors $v \in T_hG$ bijectively to the tangent spaces $T_{gh}G$ and $T_{hg}G$, respectively. In particular, it holds that 
    $$T_gG = \{\dd_eL_g(v): v \in T_eG \} = \{ \dd_eR_g(w) : w \in T_eG\}.$$ 
    Thus, each $X_e \in T_eG$ determines a smooth vector field $X \in \Gamma(TG)$ by $X_g = \dd_eL_g(X_e)$ for all $g \in G$. 
    It is called left invariant because $X_{L_g(h)} = \dd_hL_g(X_h)$ for all $h \in G$, that is, the value at a left translated point is the left translated vector. 
    (For matrix groups with identity matrix $I$ we get the simple equation $\dd_IL_A(M) = AM$ for an element $A$ and a matrix $M$ in the tangent space at $I$.) 
    Together with the Lie bracket they form the so-called Lie algebra $\mathfrak{g} \subset \Gamma(TG)$ of $G$.  
    (Remember that the \textit{Lie bracket} of $X, Y \in \Gamma(TG)$ is defined by $[X,Y] = XY - YX \in \Gamma(TG)$\footnote{This is to be understood as a differential operator, i.e., if $\psi \in \mathscr{C}^{\infty}(G)$, then $[X,Y]\psi \in \mathscr{C}^{\infty}(G)$, where $([X,Y]\psi)(g)$ measures the failure to commute when taking directional derivatives of $\psi$ at $g$ in directions $X_g$ and $Y_g$.}.)
    Furthermore, the converse also holds: every left invariant vector field is uniquely determined by its value at the identity. Consequently, $T_eG$ and $\mathfrak{g}$ are isomorphic\footnote{To be exact, $T_eG$ with the bracket $[X_e,Y_e] := [X,Y]_e$ is isomorphic to $\mathfrak{g}$ with its Lie bracket.}.

    Since $d_eL_g$ is bijective for every $g \in G$ (and $L_g$ smooth in $g$), any basis of $T_eG$ determines a unique global frame (a smooth choice of basis for each tangent space) of left invariant vector fields. In particular, for each $g \in G$ any basis of $T_eG$ can be smoothly transported to a basis of $T_gG$ without dependence on the ``path'' (``take the basis from the corresponding global frame''). Because of this, we can view $T_eG$ as the \textit{reference tangent space} of $G$.
    
    Of course, right invariant vector fields are defined analogously and have parallel properties to left invariant fields. 
    
    The integral curve $\alpha_{X}: \mathbb{R} \to G$ through $e$ of an invariant (left or right) vector field $X \in \Gamma(TG)$ determines a unique 1-parameter subgroup of $G$ since $\alpha_X(s + t) = \alpha_X(s) \alpha_X(t)$ for all $s,t \in \mathbb{R}$. 
    The \textit{group exponential} $\exp: T_eG \to G$ is then defined by 
    $$\exp(X_e) = \alpha_X(1).$$ 
    It is also a diffeomorphism in a neighborhood $V$ of $e$ and, hence, we can define the \textit{group logarithm} $\log$ as its inverse there. In the case of matrix groups they coincide with the matrix exponential and logarithm. Important for us will be the inverse consistency of the group logarithm, that is, for all $g \in V$
    \begin{equation} \label{eq:inverse_consistency}
        \log(g^{-1}) = -\log(g).
    \end{equation}
    
    Central to our constructions will be the \textit{canonical Cartan-Shouten (CCS) connection} $\nabla$ of $G$. On $\mathfrak{g}$, it is defined by
    $$\nabla_XY = \frac{1}{2} [X,Y], \quad X,Y \in \mathfrak{g};$$
    see~\cite[Ch.\ 6]{Postnikov2013}. We can extend it to general vector fields, since $\mathfrak{g}$ provides global frames.
    If we endow $G$ with its CCS connection, then geodesics and left (or right) translated 1-parameter subgroups coincide; that is, for any geodesic $\gamma$ in $G$ with $\gamma(0) = g$ both $t \mapsto g^{-1}\gamma(t)$ and $t \mapsto \gamma(t)g^{-1}$ are 1-parameter subgroups of $G$. 
    Thus, on a Lie group $G$ endowed with the CCS connection, the local exponential and logarithm of the connection are given by
    \begin{align} 
        \Exp_g(v) &= g \exp \left(\dd_g L_{g^{-1}}(v) \right) = \exp \left(\dd_g R_{g^{-1}}(v) \right) g, \quad v \in T_gG, \nonumber \\ 
        \Log_g(h) &= \dd_eL_g \log(g^{-1}h) = \dd_eR_g \log(hg^{-1}), \quad h \in U; \label{eq:CCS_logarithm}
    \end{align}
    see~\cite[Cor.\ 5.1]{pennec2019riemannian}. 
    On the other hand, the CCS connection is the Levi-Civita connection of an invariant Riemannian metric only if the latter is bi-invariant, that is, invariant under left \textit{and} right translations~\cite{GhanamHindelehThompson2007}.\footnote{In the published version, ``invariant'' is missing. There are examples of so-called \textit{non-perfect}~\cite[Thm.\ 2.2]{diatta2024dual} Lie groups for which there is a Riemannian metric that is neither left nor right invariant but whose Levi Civita connection is the CCS connection~\cite[Rem.\ 2.1]{diatta2024dual}.} This is the only case in which $\Exp_e \equiv \exp$ and $\Log_e \equiv \log$ on a Lie group with invariant Riemannian structure.
    Unfortunately, there are many Lie groups that do not possess a bi-invariant metric. A well-known example is the group of rigid body transformations SE($3$) (or, more generally, the special Euclidean group SE($n$) for $n>1$). Indeed, any Lie group that is not a direct product of compact and commutative groups does not have have a bi-invariant metric~\cite{MiolanePennec2015}, including $\textnormal{GL}(n)$, $\Glp[n]$, the special linear group $\textnormal{SL}(n)$ (i.e., all linear, orientation and volume preserving transformations of $\mathbb{R}^n$), and the Heisenberg group~\cite{PennecLorenzi2020} (again, all for $n>1$).
    
    Another fundamental automorphism of $G$ is the conjugation $C_g: h \mapsto ghg^{-1}$. Important to us will be its differential at the identity, which we call \emph{group adjoint} and denote by $\Ad(g)$. It acts bijectively on vectors $v \in T_eG$ by 
    $$\Ad(g)v = \dd_{g^{-1}}L_g (\dd_{e}R_{g^{-1}}(v)) = \dd_gR_{g^{-1}} (\dd_{e}L_{g}(v)).$$
    For matrix groups this reduces to $\Ad(A)(M) = AMA^{-1}$ for elements $A$ and matrices $M$ in the tangent space at the identity. 
    The group adjoint yields the following crucial relation~\cite[Thm.\ 6]{PennecArsigny2013}: For $f,g \in G$ such that $\log(fg^{-1})$ exists, it links logarithms of left and right translated points according to
    \begin{equation}\label{eq:left_right}
        \log(gf^{-1}) = \Ad(f) \log(f^{-1}g).
    \end{equation}

\subsection{The Group Mean} 

    The fact that bi-invariant metrics generally do not exist on Lie groups makes the construction of (statistical) notions that are invariant/equivariant under translations from both left and right very difficult in the Riemannian setting. 
    On the other hand, Lie groups that are endowed with their CCS connections possess the natural prerequisites.
    Following Pennec and Arsigny, we define the mean in a Lie group as an exponential barycenter~\cite[Def.\ 3]{PennecArsigny2013}.
    \begin{definition}[Group mean]
        Let $G$ be a Lie group endowed with its CCS connection and let $g_1,\dots,g_m \in V \subseteq G$, $V$ being simply-connected, such that $\log(g^{-1}g_i)$ exists for all $g \in V$ and $i=1,\dots,m$. Then, we call $\overline{g} \in G$ \textnormal{group mean of $g_1,\dots,g_m$} if
        \begin{equation} \label{eq:baryceneter}
        \sum_{i=1}^m \Log_{\overline{g}}(g_i) = 0.
    \end{equation}    
    \end{definition}
    Note that, because of \cref{eq:CCS_logarithm}, equation \cref{eq:baryceneter} is equivalent to both
    \begin{equation*}
        \sum_{i=1}^m \log(\overline{g}^{-1}g_i) = 0 \quad \text{and} \quad \sum_{i=1}^m \log(g_i\overline{g}^{-1}) = 0.
    \end{equation*}
    From this we can see that the elements $g_1,\dots,g_m$ must be ``sufficiently localized'' to obtain a mean value; because if they are too far apart, the logarithm may not be defined.
    The most general result known so far on existence and uniqueness is the following:
    If $U \subseteq G$ is a CSLCG (convex with semilocal convex geometry) neighborhood and $g_1,\dots,g_m \in U$, then their group mean exists and is unique. The details can be found in \cref{app:existence_uniqueness}.
    
    Central properties of the group mean are summarized in the following theorem \cite[Thm.\ 5.13]{PennecLorenzi2020}.
    \begin{theorem}[Equivariance of the group mean] \label{thm:mean_equivariance}
        Let $G$ be a Lie group endowed with its CCS connection and $\overline{g}$ be a group mean of $g_1,\dots,g_m \in V \subseteq G$. Then, for any $f \in G$, the group means of the left translated data $(fg_1,\dots,fg_m)$, right translated data $(g_1f,\dots,g_mf)$ and inverted data $(g_1^{-1},\dots,g_m^{-1})$ are $f\overline{g}$, $\overline{g}f$ and $\overline{g}^{-1}$, respectively.
    \end{theorem}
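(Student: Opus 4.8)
The plan is to verify directly that each of the three proposed elements satisfies the defining barycenter equation \cref{eq:baryceneter} for the correspondingly transformed data set. Since the theorem only asserts that $f\overline{g}$, $\overline{g}f$, and $\overline{g}^{-1}$ \emph{are} group means, it suffices to check that they solve \cref{eq:baryceneter}, and the most convenient tool is the pair of equivalent ``left'' and ``right'' reformulations noted right after the definition: $\sum_i \Log_{h}(g_i)=0$ holds if and only if $\sum_i \log(h^{-1}g_i)=0$, which in turn holds if and only if $\sum_i \log(g_i h^{-1})=0$.

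For the left-translated data $(fg_1,\dots,fg_m)$ I would test the candidate $h=f\overline{g}$ using the left form. Then $h^{-1}(fg_i) = \overline{g}^{-1}f^{-1}fg_i = \overline{g}^{-1}g_i$, so $\sum_{i=1}^m \log\bigl((f\overline{g})^{-1}fg_i\bigr) = \sum_{i=1}^m \log(\overline{g}^{-1}g_i) = 0$ by the hypothesis on $\overline{g}$. Symmetrically, for the right-translated data $(g_1f,\dots,g_mf)$ I would test $h=\overline{g}f$ in the right form, where $(g_if)h^{-1} = g_iff^{-1}\overline{g}^{-1}=g_i\overline{g}^{-1}$ yields $\sum_{i=1}^m \log(g_i\overline{g}^{-1})=0$ at once. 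Each identity is a one-line cancellation, provided the appropriate (left or right) reformulation is chosen to match the side on which the translation acts.

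The inversion case is the only one requiring an extra ingredient. Testing $h=\overline{g}^{-1}$ for the data $(g_1^{-1},\dots,g_m^{-1})$ in the left form gives $h^{-1}g_i^{-1} = \overline{g}\,g_i^{-1} = (g_i\overline{g}^{-1})^{-1}$. Here I would invoke the inverse consistency \cref{eq:inverse_consistency}, $\log(x^{-1})=-\log(x)$, to write $\log\bigl((g_i\overline{g}^{-1})^{-1}\bigr) = -\log(g_i\overline{g}^{-1})$, whence $\sum_{i=1}^m \log(\overline{g}\,g_i^{-1}) = -\sum_{i=1}^m \log(g_i\overline{g}^{-1}) = 0$, again by the defining property of $\overline{g}$. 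Because all three cancellations are reversible, the assignments are bijective and therefore also transport uniqueness of the mean when it holds.

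The conceptual content is thus minimal; the part that needs genuine care — and which I expect to be the main (if modest) obstacle — is domain bookkeeping. One must ensure that the transformed data still lie in a region where the relevant group logarithms are defined, so that \cref{eq:baryceneter} and its reformulations are meaningful and so that inverse consistency is applicable; this is precisely where the hypotheses on $V$ (simple connectedness and existence of the logarithms $\log(g^{-1}g_i)$) enter. Since left and right translations are automorphisms of $G$ that are affine for the CCS connection, they carry an admissible neighborhood of $\overline{g}$ to an admissible neighborhood of $f\overline{g}$ (respectively $\overline{g}f$), and inversion together with \cref{eq:inverse_consistency} does likewise for $\overline{g}^{-1}$; confirming this compatibility is the only step beyond the algebraic cancellations above.
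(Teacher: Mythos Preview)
The paper does not supply its own proof of this theorem; it is stated with a citation to \cite[Thm.\ 5.13]{PennecLorenzi2020} and no proof environment follows. Your argument is correct and is the standard one: the left/right translation cases are immediate cancellations once the matching left/right reformulation of \cref{eq:baryceneter} is used, and the inversion case follows from \cref{eq:inverse_consistency} combined with the right reformulation. Your remark on domain bookkeeping is the only subtlety, and you have identified it accurately.
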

    According to the theorem, group means are equivariant under the group operations. These are very favorable properties that make (statistical) data analysis more robust.
    
    Finally, group means can be computed efficiently with a fixed point iteration~\cite[Thm.\ 5.14]{PennecLorenzi2020}.
    
\subsection{Sample Covariance and Bi-invariant Mahalanobis Distance}

    In \cite{PennecArsigny2013}, Pennec and Arsigny also define the 
    sample covariance of Lie-group-valued data $(g_1,\dots,g_m)$ with group mean $\overline{g}$. Denoting the vector of coordinates of a tangent vector as well as the representing matrix of a (bi)linear map in any given basis by $[\cdot]$, it is the (2,0)-tensor
    $$\Sigma^*_{g_i} = \frac{1}{m} \sum_{l=1}^m \Log_{\overline{g}}(g_l) \otimes \Log_{\overline{g}}(g_l) \in T_{\overline{g}}G \otimes T_{\overline{g}}G,$$
    where the tensor product $\otimes$ means that in any basis of $T_{\overline{g}}G$, the representing matrix is given by $[\Sigma^*_{g_i}] = 1/m \sum_l [\Log_{\overline{g}}(g_l)] [\Log_{\overline{g}}(g_l)]^T$.
    From this, they define the \textit{bi-invariant Mahalanobis distance} of $f \in G$ to the distribution of the $g_i$ by
    \begin{equation} \label{eq:mahalanobis_distance}
        \mu^2_{(\overline{g}, \Sigma^*_{g_i})}(f) =  \Big[\Log_{\overline{g}}(f) \Big]^T \Big[\Sigma_{g_i}^{*-1} \Big] \Big[\Log_{\overline{g}}(f) \Big],
    \end{equation}
    if $\Log_{\overline{g}}(f) = \log(\overline{g}^{-1}f)$ exists.
    (Here, $\Sigma_{g_i}^{*-1}$ denotes the inverse of $\Sigma^*_{g_i}$.)
    The distance~\cref{eq:mahalanobis_distance} is left and right invariant, that is, 
    $$\mu^2_{(\overline{g}, \Sigma^*_{g_i})}(f) = \mu^2_{(h\overline{g}, \Sigma^*_{hg_i})}(hf) = \mu^2_{(\overline{g}h, \Sigma^*_{g_ih})}(fh)$$
    for all $h \in G$ and all $f$ such that $\Log_{\overline{g}}(f)$ exists; see~\cite[p.\ 181]{PennecLorenzi2020}.

\section{Bi-invariant Dissimilarity Measures for Sample Distributions on Lie Groups}
    In this section we derive bi-invariant extensions of the Hotelling $T^2$ statistic and Bhattacharyya distance to a Lie group $G$ that is endowed with its CCS connection.
    To this end, we always consider a CSLCG neighborhood $U$ of a Lie group $G$.

\subsection{Extending the Hotelling \texorpdfstring{{\boldmath$T^2$}}{T Squared} Statistic}
    To generalize Hotelling's $T^2$ statistic, we define a slightly modified sample covariance.
    This is motivated by the fact that $T_eG$ is the reference tangent space for the whole group.
    \begin{definition}[Centralized sample covariance] \label{def:centralized_covariance}
    Given data $(g_1,\dots,g_m)$ in $U$ with group mean $\overline{g}$, its \textnormal{left-centralized sample covariance} is defined as the (2,0)-tensor 
    $$\Sigma_{g_i} = \frac{1}{m} \sum_{l=1}^m \log(\overline{g}^{-1}g_l) \otimes \log(\overline{g}^{-1}g_l) \in T_eG \otimes T_eG,$$
    which in local coordinates is 
        $$ \Big[\Sigma_{g_i} \Big] = \frac{1}{m} \sum^m_{l=1} \Big[\log(\overline{g}^{-1}g_l) \Big] \Big[\log(\overline{g}^{-1}g_l) \Big]^T.$$
    \end{definition}
    Since we only use the centralized covariance that is defined under left translations of the data, we will drop the word ``left'' in the following.
    The centralized covariance will be fundamental for our constructions since it allows to compare the variability of data sets with different means in a single tangent space; this, in turn, makes it possible to transfer notions from multivariate statistics.
    Taking advantage of the bi-invariance property we also have
    \begin{equation*}
        \mu^2_{(\overline{g}, \Sigma^*_{g_i})}(f) = \mu^2_{(e, \Sigma_{g_i})}(\overline{g}^{-1}f).
    \end{equation*}
    Moreover, we can now define the pooled covariance at the identity: 
    \begin{definition}[Pooled sample covariance]
        Given data sets $(g_1,\dots,g_m)$ and $(h_1,\dots,h_n)$ in $U$ with group means $\overline{g}$ and $\overline{h}$, their \textnormal{left-pooled sample covariance} is defined by
        $$\widehat{\Sigma}_{g_i,h_i} = \frac{1}{m + n - 2} \left( m\Sigma_{g_i} + n\Sigma_{h_i} \right).$$
    \end{definition}
    As for the centralized covariance, we drop the word ``left'' for the pooled covariance. Furthermore, because there is no danger of confusion, we only write $\widehat{\Sigma}$ from now on. 
    
    In order to obtain bi-invariant indices from the newly-defined notions, we need to understand how they transform under joint translations of the data. For this, we denote the centralized covariance of data that was jointly left or right translated with $f \in G$ by $f \bullet \Sigma_{g_i}$ and $\Sigma_{g_i} \bullet f$, respectively, (i.e., $f \bullet \Sigma_{g_i} = \Sigma_{fg_i}$ and $\Sigma_{g_i} \bullet f = \Sigma_{g_if}$); furthermore, $\textnormal{inv} \bullet \Sigma_{g_i} = \Sigma_{g_i^{-1}}$ denotes the centralized covariance of the inverted data points. We also extend these definitions linearly to weighted sums of covariances. 
    Then, we have the following lemma.
    \begin{lemma} \label{lem:centralized_covariance}
         Given data $(g_1,\dots,g_m)$ in $U$, its centralized covariance is invariant under left translations, while under right translations with $f \in G$ it transforms according to
        $$\Big[\Sigma_{g_i} \bullet f \Big] = \Big[\Ad(f^{-1}) \Big] \Big[\Sigma_{g_i} \Big] \Big[\Ad(f^{-1}) \Big]^T.$$
        Furthermore, if the data points are inverted, then
        $$\Big[\inv \bullet \Sigma_{g_i} \Big] = \Big[\Ad(\overline{g}) \Big] \Big[\Sigma_{g_i} \Big] \Big[\Ad(\overline{g}) \Big]^T.$$
    \end{lemma}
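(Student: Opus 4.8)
The plan is to reduce the whole statement to understanding how a single summand $\log(\overline{g}^{-1}g_l)$ transforms under each of the three operations. Indeed, by \cref{def:centralized_covariance} the matrix $[\Sigma_{g_i}]$ is the average of the outer products $[\log(\overline{g}^{-1}g_l)][\log(\overline{g}^{-1}g_l)]^T$, so once I show that each log vector transforms as $[\log(\cdots)]\mapsto A[\log(\overline{g}^{-1}g_l)]$ for one fixed linear map $A$ (independent of $l$), linearity of the sum immediately yields $[\Sigma]\mapsto A[\Sigma]A^T$, which is exactly the claimed congruence. In all three cases the group mean of the transformed data is handed to me by \cref{thm:mean_equivariance}: it is $f\overline{g}$ for the left-translated data, $\overline{g}f$ for the right-translated data, and $\overline{g}^{-1}$ for the inverted data.

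For the left-translated data $(fg_1,\dots,fg_m)$ the summand becomes $\log((f\overline{g})^{-1}fg_l)=\log(\overline{g}^{-1}f^{-1}fg_l)=\log(\overline{g}^{-1}g_l)$, so each term, and hence $\Sigma_{g_i}$ itself, is literally unchanged ($A=\textnormal{Id}$). For the right-translated data $(g_1f,\dots,g_mf)$ the summand is $\log((\overline{g}f)^{-1}g_lf)=\log(f^{-1}(\overline{g}^{-1}g_l)f)$. I would obtain the factor $\Ad(f^{-1})$ directly from the left–right relation \cref{eq:left_right}, which I write as $\log(ab^{-1})=\Ad(b)\log(b^{-1}a)$: taking $a=f^{-1}\overline{g}^{-1}g_l$ and $b=f^{-1}$ gives $ab^{-1}=f^{-1}(\overline{g}^{-1}g_l)f$ and $b^{-1}a=\overline{g}^{-1}g_l$, so that $[\log(f^{-1}(\overline{g}^{-1}g_l)f)]=[\Ad(f^{-1})][\log(\overline{g}^{-1}g_l)]$, i.e.\ $A=\Ad(f^{-1})$.

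For the inverted data $(g_1^{-1},\dots,g_m^{-1})$ the summand is $\log((\overline{g}^{-1})^{-1}g_l^{-1})=\log(\overline{g}g_l^{-1})$. Here I would first rewrite $\overline{g}g_l^{-1}=(g_l\overline{g}^{-1})^{-1}$ and use inverse consistency \cref{eq:inverse_consistency} to get $\log(\overline{g}g_l^{-1})=-\log(g_l\overline{g}^{-1})$, and then apply the same relation $\log(ab^{-1})=\Ad(b)\log(b^{-1}a)$ with $a=g_l$, $b=\overline{g}$ to obtain $\log(g_l\overline{g}^{-1})=\Ad(\overline{g})\log(\overline{g}^{-1}g_l)$. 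Combining the two gives $[\log(\overline{g}g_l^{-1})]=-[\Ad(\overline{g})][\log(\overline{g}^{-1}g_l)]$, so $A=-\Ad(\overline{g})$; the sign is harmless because $A$ enters only through the congruence, and $(-A)[\Sigma](-A)^T=A[\Sigma]A^T$. The inversion case is the one I expect to require the most care: it is the only case that chains two separate identities, and I must check both that all the logarithms involved remain in the domain where \cref{eq:inverse_consistency} and \cref{eq:left_right} hold (guaranteed by confining the data to the CSLCG neighborhood $U$) and that the minus sign genuinely cancels in the outer product rather than surviving.
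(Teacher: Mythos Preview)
Your proposal is correct and follows essentially the same route as the paper's own proof: both reduce to the transformation of the individual vectors $\log(\overline{g}^{-1}g_l)$, invoke \cref{thm:mean_equivariance} for the mean of the transformed data, handle left translation by direct cancellation, right translation via \cref{eq:left_right}, and inversion by combining \cref{eq:inverse_consistency} with \cref{eq:left_right} and noting that the sign squares away in the outer product. Your framing in terms of a single linear map $A$ acting by congruence is a clean way to organize the argument, but it is not a different idea.
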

    \begin{proof}
        Using the equivariance of the mean gives
        \begin{equation*}
            \Big[f \bullet \Sigma_{g_i} \Big] = \frac{1}{m} \sum^m_{l=1} \Big[\log \left(\overline{g}^{-1}f^{-1}fg_l \right) \Big] \Big[\log \left(\overline{g}^{-1}f^{-1}fg_l \right)]^T = \Big[\Sigma_{g_i} \Big],
        \end{equation*}
        
        and applying \cref{eq:left_right} yields 
        \begin{align*}
            \Big[\Sigma_{g_i} \bullet f \Big] &= \frac{1}{m} \sum^m_{l=1} \Big[\log \left(f^{-1}\overline{g}^{-1}g_lf \right) \Big] \Big[\log \left(f^{-1}\overline{g}^{-1}g_lf \right) \Big]^T \\
            &= \frac{1}{m}  \sum^m_{l=1} \Big[\Ad \left(f^{-1} \right) \Big] \Big[\log \left(\overline{g}^{-1}g_l \right) \Big] \Big[\log \left(\overline{g}^{-1}g_l \right)]^T \Big[\Ad \left(f^{-1} \right) \Big]^T \\
            &= \Big[\Ad \left(f^{-1} \right) \Big] \Big[\Sigma_{g_i} \Big] \Big[\Ad \left(f^{-1} \right) \Big]^T.
        \end{align*}
        
        After inverting the data points we find, using \cref{thm:mean_equivariance}, \cref{eq:inverse_consistency}, and \cref{eq:left_right}, that
        \begin{align*}
            \Big[\inv \bullet \Sigma_{g_i} \Big] &= \frac{1}{m} \sum^m_{l=1} \Big[\log \left(\overline{g} g_l^{-1} \right) \Big] \Big[\log \left(\overline{g} g_l^{-1} \right) \Big]^T \\
            &= \frac{1}{m} \sum^m_{l=1} (-1)^2 \Big[\log \left(g_l\overline{g}^{-1} \right) \Big] \Big[\log \left(g_l\overline{g}^{-1} \right) \Big]^T \\
            &= \Big[\Ad(\overline{g}) \Big] \Big[\Sigma_{g_i} \Big] \Big[\Ad(\overline{g}
            ) \Big]^T.
        \end{align*}
    \end{proof}
    We can translate this directly to the pooled covariance:
    \begin{corollary} \label{cor:pooled_covariance}
        The pooled covariance is invariant under left translations of the data and transforms under right translations with $f \in G$ according to
        $$\Big[\widehat{\Sigma} \bullet f \Big] = \Big[\Ad \left(f^{-1} \right) \Big] \Big[\widehat{\Sigma} \Big] \Big[\Ad \left(f^{-1} \right) \Big]^T.$$
        If the data is inverted, then we have
        $$\Big[ \inv \bullet \widehat{\Sigma} \Big] = \frac{1}{m + n - 2} \left( m \Big[\Ad(\overline{g}) \Big] \Big[\Sigma_{g_i} \Big] \Big[\Ad(\overline{g}) \Big]^T + n \Big[\Ad \left(\overline{h} \right) \Big] \Big[\Sigma_{h_i} \Big] \Big[\Ad \left(\overline{h} \right) \Big]^T \right).$$
        In particular, if $\overline{g} = \overline{h}$, then
        $$\Big[ \inv \bullet \widehat{\Sigma} \Big] = \Big[\Ad(\overline{g}) \Big] \Big[ \widehat{\Sigma} \Big] \Big[\Ad(\overline{g}) \Big]^T.$$
    \end{corollary}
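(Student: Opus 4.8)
The plan is to exploit that the pooled covariance $\widehat{\Sigma} = \frac{1}{m+n-2}\left( m\Sigma_{g_i} + n\Sigma_{h_i} \right)$ is a fixed weighted sum of the two centralized covariances, together with the fact (noted just before \cref{lem:centralized_covariance}) that the operations $f \bullet$, $\bullet\, f$, and $\inv \bullet$ extend linearly to such weighted sums. Consequently, each transformation rule for $\widehat{\Sigma}$ reduces to applying \cref{lem:centralized_covariance} separately to $\Sigma_{g_i}$ and $\Sigma_{h_i}$ and then recombining with the weights $m/(m+n-2)$ and $n/(m+n-2)$. So essentially the whole proof is ``distribute the $\bullet$ operation across the sum, quote the lemma on each term, collect.''

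For the left-translation claim, the lemma gives that both $[\Sigma_{g_i}]$ and $[\Sigma_{h_i}]$ are unchanged, so by linearity their weighted sum, namely $[\widehat{\Sigma}]$, is unchanged as well. For the right-translation claim, I would substitute $[\Sigma_{g_i} \bullet f] = [\Ad(f^{-1})][\Sigma_{g_i}][\Ad(f^{-1})]^T$ and the analogous identity for $h_i$ into the linear combination. Since the conjugating matrix $[\Ad(f^{-1})]$ is the \emph{same} for both summands, it factors out on the left and (transposed) on the right of the bracketed sum, leaving exactly $[\Ad(f^{-1})][\widehat{\Sigma}][\Ad(f^{-1})]^T$.

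The inversion claim is handled the same way, inserting $[\inv \bullet \Sigma_{g_i}] = [\Ad(\overline{g})][\Sigma_{g_i}][\Ad(\overline{g})]^T$ and the corresponding identity with $\overline{h}$. The one point that needs care — and the reason the general formula is a sum of two conjugated terms rather than a single conjugation — is that the relevant adjoints are now $\Ad(\overline{g})$ and $\Ad(\overline{h})$, which differ because the two data sets have different group means. Hence no common factor can be pulled out in general, and one simply records the weighted sum verbatim. Only under the extra hypothesis $\overline{g} = \overline{h}$ does the single adjoint $[\Ad(\overline{g})]$ become common to both summands, at which point the same factoring step as in the right-translation case collapses the expression to $[\Ad(\overline{g})][\widehat{\Sigma}][\Ad(\overline{g})]^T$. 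I expect no genuine obstacle here; the entire argument is linearity plus bookkeeping of whether the conjugating factor is shared across the two summands, and the only conceptual content is recognizing that the shared-factor step fails for inversion precisely when the means disagree.
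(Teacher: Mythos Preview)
Your proposal is correct and matches the paper's approach exactly: the paper introduces the corollary with the phrase ``We can translate this directly to the pooled covariance'' and gives no separate proof, treating it as an immediate consequence of \cref{lem:centralized_covariance} together with the linear extension of the $\bullet$ operations to weighted sums of covariances. Your write-up simply spells out that one-line transfer in detail.
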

    With this, we propose the following generalization of the Hotelling $T^2$ statistic for Lie groups, using the abbreviated notation $\big((g_i),(h_i) \big)$ for $\big((g_1,\dots,g_m),(h_1,\dots,h_n) \big)$:
    \begin{definition}[Bi-invariant Hotelling $T^2$ statistic] \label{def:t2_statistic}
        Let $(g_1,\dots,g_m)$ and $(h_1,\dots,h_n)$ be data sets in $U$ with group means $\overline{g}$ and $\overline{h}$ such that $\log(\overline{g}^{-1}\overline{h})$ exists, and let $\Sigma$ be their pooled sample covariance. Then, the \textnormal{bi-invariant Hotelling $T^2$ statistic} is defined by 
        \begin{align*}
            t^{2} \big((g_i),(h_i) \big) &= \frac{mn}{m + n} \mu^2_{(e, \widehat{\Sigma})} \left( \overline{g}^{-1}\overline{h} \right).
        \end{align*}
    \end{definition}
    
    This definition is independent of the chosen basis of $T_eG$ \cite[p.\ 39]{PennecArsigny2013}. Furthermore, it reduces to \cref{eq:Euclidean_test} when $G = \mathbb{R}^n$. This is a consequence of the existence of a common reference tangent space; as the latter is missing in Riemannian manifolds, reduction to the multivariate case is problematic using the Riemannian approach. Indeed, the only generalization of Hotelling's $T^2$ statistic to Riemannian manifolds known to the authors does not have this property; see~\cref{app:riemannian} for more on this. 
    
    The following theorem shows that the newly defined notion has the invariance properties we aimed for and does not depend on the order of the data sets.
    \begin{theorem}[Properties of the bi-invariant Hotelling $T^2$ statistic] \label{thm:hotelling}
        Let $(g_1,\dots,g_m)$ and $(h_1,\dots,h_n)$ be data sets in $U$ with group means $\overline{g}$ and $\overline{h}$, respectively.
        The bi-invariant Hotelling $T^2$ statistic has the following properties: \\
        (i)\ \ \ it is symmetric, \\
        (ii)\ \  it is invariant under left and right translations, \\
        (iii) if $\overline{g} = \overline{h}$, then it is invariant under inversion.
    \end{theorem}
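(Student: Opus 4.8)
The plan is to reduce all three properties to the transformation rules already proved for the group mean (\cref{thm:mean_equivariance}) and the pooled covariance (\cref{cor:pooled_covariance}), combined with the two logarithm identities \cref{eq:inverse_consistency} and \cref{eq:left_right}. The unifying observation is that, writing $v := [\log(\overline{g}^{-1}\overline{h})]$ and $S := [\widehat{\Sigma}]$, the statistic equals $\frac{mn}{m+n}\,v^T S^{-1} v$, so for each operation I only need to record how $v$ and $S$ transform and then check that the induced linear factors cancel in the quadratic form. For symmetry (i), both $\frac{mn}{m+n}$ and $\widehat{\Sigma}=(m\Sigma_{g_i}+n\Sigma_{h_i})/(m+n-2)$ are manifestly invariant under exchanging the two data sets, while $\overline{h}^{-1}\overline{g}=(\overline{g}^{-1}\overline{h})^{-1}$ together with \cref{eq:inverse_consistency} gives $\log(\overline{h}^{-1}\overline{g})=-v$; since $(-v)^T S^{-1}(-v)=v^T S^{-1} v$, the statistic is unchanged.

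For invariance under translations (ii) I would treat the two sides in turn. Under left translation by $f\in G$, \cref{thm:mean_equivariance} sends the means to $f\overline{g}$ and $f\overline{h}$, so the argument $(f\overline{g})^{-1}(f\overline{h})=\overline{g}^{-1}\overline{h}$ is unchanged, and $\widehat{\Sigma}$ is left-invariant by \cref{cor:pooled_covariance}; nothing moves. Under right translation the means become $\overline{g}f$ and $\overline{h}f$, so the argument becomes the conjugate $f^{-1}(\overline{g}^{-1}\overline{h})f$, whose logarithm equals $\Ad(f^{-1})\log(\overline{g}^{-1}\overline{h})$ by \cref{eq:left_right}; thus the new coordinate vector is $Av$ with $A:=[\Ad(f^{-1})]$. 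Simultaneously \cref{cor:pooled_covariance} gives $[\widehat{\Sigma}\bullet f]=A\,S\,A^T$, so its inverse is $A^{-T}S^{-1}A^{-1}$, and the transformed distance is $(Av)^T A^{-T}S^{-1}A^{-1}(Av)=v^T S^{-1} v$. All adjoint factors cancel and right-invariance follows.

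For invariance under inversion (iii) I assume $\overline{g}=\overline{h}$. By \cref{thm:mean_equivariance} the inverted means are $\overline{g}^{-1}=\overline{h}^{-1}$, so the new argument is $\overline{g}\,\overline{h}^{-1}$; applying \cref{eq:left_right} and then \cref{eq:inverse_consistency} gives $\log(\overline{g}\,\overline{h}^{-1})=-\Ad(\overline{h})\log(\overline{g}^{-1}\overline{h})$, i.e.\ the new vector is $-[\Ad(\overline{g})]v$. The assumption $\overline{g}=\overline{h}$ is exactly what makes \cref{cor:pooled_covariance} yield the uniform transformation $[\inv\bullet\widehat{\Sigma}]=[\Ad(\overline{g})]\,S\,[\Ad(\overline{g})]^T$, so the cancellation of (ii) applies verbatim (the extra sign being harmless) and the distance is preserved. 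As a consistency check, $\overline{g}=\overline{h}$ forces $v=[\log e]=0$, so both statistics are in fact identically zero.

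The hard part will be the bookkeeping in (ii) and (iii). Because $\widehat{\Sigma}$ is a $(2,0)$-tensor, its representing matrix transforms by a congruence $A(\cdot)A^T$, so in the Mahalanobis form its inverse appears with the inverse-transpose $A^{-T}$; the whole argument hinges on verifying that this $A^{-T}$, together with the $A$ (and $A^T$) carried by the transformed coordinate vector, collapses to the identity. Aligning these transposes and inverses correctly — and first deriving the conjugation identity $\log(f^{-1}af)=\Ad(f^{-1})\log(a)$ from \cref{eq:left_right} — is the only place where real care is needed; the remaining steps are direct substitution.
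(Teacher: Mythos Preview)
Your proposal is correct and follows essentially the same route as the paper for (i) and (ii): symmetry via \cref{eq:inverse_consistency} and order-independence of $\widehat{\Sigma}$, left-invariance by direct cancellation, and right-invariance via the adjoint transformation from \cref{eq:left_right} and \cref{cor:pooled_covariance} collapsing in the quadratic form.

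For (iii) there is a small divergence worth noting. You set up the full adjoint-cancellation machinery (transforming both $v$ and $\widehat{\Sigma}$ by $\Ad(\overline{g})$ and invoking the special case of \cref{cor:pooled_covariance}), and only afterwards observe as a ``consistency check'' that $\overline{g}=\overline{h}$ forces $v=[\log e]=0$. The paper takes precisely that last observation as the entire argument: since $\overline{g}=\overline{h}$, the Mahalanobis distance vanishes before and after inversion, so there is nothing to compare. Your longer computation is not wrong, but it is superfluous here; the one-line proof you relegated to a sanity check is the paper's actual proof of (iii).
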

    \begin{proof}
        Note first that the pooled covariance is independent of the order of the data groups. Thus, using \cref{eq:inverse_consistency}, property (i) follows from
        \begin{align*}
        t^{2} \big((g_i),(h_i) \big) = \frac{mn}{m + n} (-1)^{2} \Big[\log \left(\overline{h}^{-1}\overline{g} \right) \Big]^T \Big[\widehat{\Sigma} \Big]^{-1} \Big[\log \left(\overline{h}^{-1}\overline{g} \right) \Big] = t^{2}\big((h_i),(g_i) \big).
        \end{align*}
        
        Next, we show (ii). Invariance under left translations follows directly since $f$ and $f^{-1}$ cancel out as in the proof of \cref{lem:centralized_covariance}. Let $f \in G$. 
        Then, using \cref{thm:mean_equivariance,eq:left_right,cor:pooled_covariance}, we find
        \begin{align*}
            t^{2}\big((g_if),(h_i f) \big) &= \frac{mn}{m + n}   \Big[\log \left(f^{-1}\overline{g}^{-1}\overline{h} f \right) \Big]^T \Big[\widehat{\Sigma} \bullet f \Big]^{-1} \Big[\log \left(f^{-1}\overline{g}^{-1}\overline{h} f \right) \Big]  \\
            &= \frac{mn}{m + n} \Big[\log \left(\overline{g}^{-1}\overline{h} \right) \Big]^T \Big[\Ad(f^{-1}) \Big]^T \Big[\Ad(f^{-1}) \Big]^{-T} \Big[\widehat{\Sigma} \Big]^{-1} \\
            &\ \ \ \ \ \ \ \ \ \ \ \ \ \ \ \ \ \ \ \ \ \ \ \Big[\Ad \left(f^{-1} \right) \Big]^{-1} \Big[\Ad \left(f^{-1} \right) \Big] \Big[\log \left(\overline{g}^{-1}\overline{h} \right) \Big] \\
            &= t^{2} \big((g_i),(h_i) \big).
        \end{align*}

        Finally, (iii) follows immediately from $\log(e) = 0$.
    \end{proof}
    Note that although the statistic is not invariant under inversion in general, the fact that equality of means is invariantly detected makes it also interesting (e.g., to perform hypothesis tests for equality of means) when invariance under inversion is of interest. (Nevertheless, the authors are not aware of any such application.)
    \begin{remark}
        Observe that we could replace left by right-centralized sample covariances in all definitions of this section. The Hotelling $T^2$ statistic will then be different in general, but it will have the same invariance properties as~\cref{def:t2_statistic}. 
     \end{remark}

\subsection{Extending the Bhattacharyya Distance} \label{sec:bhattacharyya}

    Since the first summand of the multivariate Bhattacharyya distance \cref{eq:multivariate_DB} coincides with a Hotelling $T^2$ statistic, we can generalize it using the bi-invariant Hotelling $T^2$ statistic.
    To this end, we begin with the definition of the averaged sample covariance:
    \begin{definition}[Averaged sample covariance] \label{def_averaged_covariance}
        Given data sets $(g_1,\dots,g_m)$ and $(h_1,\dots,h_n)$ in $U$ with group means $\overline{g}$ and $\overline{h}$, their \textnormal{left-averaged sample covariance} is defined by
        $$\overline{\Sigma}_{g_i, h_i} = \frac{1}{2} (\Sigma_{g_i} + \Sigma_{h_i}).$$
    \end{definition}
    Again, in the following we drop the word ``left'' and the subscripts. Since it only differs in the weighting, the averaged covariance has the same properties as the pooled covariance.
    \begin{corollary} \label{cor:averaged_covariance}
        The averaged covariance is invariant under left translations of the data and transforms under right translations with $f \in G$ according to
        $$\Big[\overline{\Sigma}\bullet f \Big] = \Big[\Ad \left(f^{-1} \right) \Big] \Big[\overline{\Sigma} \Big] \Big[\Ad \left(f^{-1} \right) \Big]^T.$$
        If the data is inverted, then we have
        $$\Big[ \inv \bullet \overline{\Sigma} \Big] = \frac{1}{2} \left( \Big[\Ad(\overline{g}) \Big] \Big[\Sigma_{g_i} \Big] \Big[\Ad(\overline{g}) \Big] + \Big[\Ad \left(\overline{h} \right) \Big] \Big[\Sigma_{h_i} \Big] \Big[\Ad \left(\overline{h} \right) \Big] \right).$$
        In particular, if $\overline{g} = \overline{h}$, then
        $$\Big[ \inv \bullet \overline{\Sigma} \Big] = \Big[\Ad(\overline{g}) \Big] \Big[ \overline{\Sigma} \Big] \Big[\Ad(\overline{g}) \Big]^T.$$
    \end{corollary}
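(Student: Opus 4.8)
The plan is to observe that the averaged covariance $\overline{\Sigma} = \frac{1}{2}(\Sigma_{g_i} + \Sigma_{h_i})$ is simply a weighted sum of the two individual centralized covariances, differing from the pooled covariance $\widehat{\Sigma}$ only in the numerical weights. Since the operations $f \bullet (\cdot)$, $(\cdot) \bullet f$ and $\inv \bullet (\cdot)$ were defined to extend linearly to weighted sums of covariances, every claim will follow by applying \cref{lem:centralized_covariance} to each summand and recombining. In effect the argument is the same, \emph{mutatis mutandis}, as the proof of \cref{cor:pooled_covariance}; only the prefactors change.

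Concretely, for left translation I would write $f \bullet \overline{\Sigma} = \frac{1}{2}(f \bullet \Sigma_{g_i} + f \bullet \Sigma_{h_i})$ and use that each summand is left invariant by \cref{lem:centralized_covariance}, so that $f \bullet \overline{\Sigma} = \overline{\Sigma}$. For right translation with $f \in G$ the same lemma gives $[\Sigma_{g_i} \bullet f] = [\Ad(f^{-1})][\Sigma_{g_i}][\Ad(f^{-1})]^T$ and likewise for $\Sigma_{h_i}$; since the conjugating factor $[\Ad(f^{-1})]$ is common to both terms, it can be pulled out of the average, yielding $[\overline{\Sigma} \bullet f] = [\Ad(f^{-1})]\,\frac{1}{2}([\Sigma_{g_i}] + [\Sigma_{h_i}])\,[\Ad(f^{-1})]^T = [\Ad(f^{-1})][\overline{\Sigma}][\Ad(f^{-1})]^T$.

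The inversion case is where the only genuine subtlety lies. Applying \cref{lem:centralized_covariance} to each summand gives $[\inv \bullet \Sigma_{g_i}] = [\Ad(\overline{g})][\Sigma_{g_i}][\Ad(\overline{g})]^T$ and $[\inv \bullet \Sigma_{h_i}] = [\Ad(\overline{h})][\Sigma_{h_i}][\Ad(\overline{h})]^T$; averaging these immediately produces the stated formula. The point to flag is that the two summands are conjugated by \emph{different} adjoints, $\Ad(\overline{g})$ and $\Ad(\overline{h})$, because each centralized covariance is transported by its own group mean. Consequently they cannot be factored into a single conjugation, which is precisely why the expression does not simplify further in general. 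In the special case $\overline{g} = \overline{h}$, however, both adjoints coincide, the common factor $[\Ad(\overline{g})]$ can be extracted from the average exactly as in the right-translation computation, and one obtains $[\inv \bullet \overline{\Sigma}] = [\Ad(\overline{g})][\overline{\Sigma}][\Ad(\overline{g})]^T$. Since no step uses any property of the weights beyond linearity, I do not expect any real obstacle; the content of the corollary is that it is a direct transcription of \cref{lem:centralized_covariance} through the linear combination defining $\overline{\Sigma}$.
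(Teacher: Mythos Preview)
Your proposal is correct and matches the paper's approach exactly: the paper does not even write out a separate proof, noting only that ``since it only differs in the weighting, the averaged covariance has the same properties as the pooled covariance,'' i.e., \cref{cor:averaged_covariance} follows from \cref{lem:centralized_covariance} by the same linearity argument as \cref{cor:pooled_covariance}. Your write-up is, if anything, more explicit than what the paper provides.
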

    With this, we can generalize the Bhattacharyya distance to Lie groups:
    \begin{definition}[Bi-invariant Bhattacharyya distance] \label{def:Bhat_distance}
        Let $(g_1,\dots,g_m)$ and $(h_1,\dots,h_n)$ be data sets in $U$ with group means $\overline{g}$ and $\overline{h}$. Then, the \textnormal{bi-invariant Bhattacharyya distance} is defined by 
        \begin{equation*}
            \textnormal{D}_B \big((g_i), (h_i) \big) = \frac{1}{8} \mu^2_{(e, \overline{\Sigma})} \left( \overline{g}^{-1}\overline{h} \right) + \frac{1}{2}\ln\left(\frac{\det \left(\Big[\overline{\Sigma} \Big] \right)}{\sqrt{\det \left( \Big[\Sigma_{g_i} \Big] \right) \det \left( \Big[\Sigma_{h_i} \Big] \right)}} \right).
        \end{equation*}
    \end{definition}
    The following theorem shows that the attribute ``bi-invariant''  is indeed justified.
    \begin{theorem}[Properties of the bi-invariant Bhattacharyya distance] \label{thm:properties_Bhatt}
        Let $(g_1,\dots,g_m)$ and $(h_1,\dots,h_n)$ data sets in $U$ with group means $\overline{g}, \overline{h}$, respectively.   
        The bi-invariant Bhattacharyya distance has the following properties: \\
        (i)\ \ \ \ it is symmetric, \\
        (ii) \ \ it does not depend on the chosen basis, \\
        (iii)\ \ it is invariant under left and right translations, \\
        (iv) \ \hspace{.5mm}if $\overline{g} = \overline{h}$, then it is invariant under inversion.
    \end{theorem}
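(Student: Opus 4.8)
The plan is to split $\textnormal{D}_B$ into its two summands and verify the four properties for each piece separately. The first summand, $\frac{1}{8}\mu^2_{(e,\overline{\Sigma})}(\overline{g}^{-1}\overline{h})$, is formally identical to the bi-invariant Hotelling $T^2$ statistic up to the constant factor and the substitution of $\overline{\Sigma}$ for $\widehat{\Sigma}$. Since \cref{cor:averaged_covariance} endows the averaged covariance with exactly the same transformation behavior under left/right translation and inversion that \cref{cor:pooled_covariance} gives the pooled one, properties (i), (iii), and (iv) for this summand follow verbatim from the argument of \cref{thm:hotelling}, and its basis-independence (property (ii)) is the basis-independence of the bi-invariant Mahalanobis distance already noted after \cref{def:t2_statistic}. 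The real work therefore concentrates on the log-determinant summand.

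The key observation for the second summand is that it is invariant under any simultaneous congruence $[\Sigma]\mapsto A[\Sigma]A^T$ applied to all three matrices $[\overline{\Sigma}]$, $[\Sigma_{g_i}]$, $[\Sigma_{h_i}]$ with a common nonsingular $A$. Each such transformation multiplies a determinant by $\det(A)^2$, so the numerator acquires one factor $\det(A)^2$ while the square root in the denominator acquires $\sqrt{\det(A)^2\cdot\det(A)^2}=\det(A)^2$; these cancel and leave the ratio, hence its logarithm, unchanged. This single computation drives all four properties once the relevant $A$ is identified, so the plan is to reuse it four times.

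With this in hand I would proceed as follows. For (i) I note that $[\overline{\Sigma}]$ is symmetric in the two data sets and the denominator $\sqrt{\det([\Sigma_{g_i}])\det([\Sigma_{h_i}])}$ is manifestly symmetric under exchange of $g_i$ and $h_i$; together with the symmetry of the Mahalanobis term (via $\log(\overline{g}^{-1}\overline{h})=-\log(\overline{h}^{-1}\overline{g})$ and the squaring in the quadratic form) this gives symmetry. For (ii) the relevant $A$ is a change-of-basis matrix $P$ on $T_eG$, so the congruence cancellation removes the basis dependence of the log term. For (iii), left invariance is immediate since $\overline{g}^{-1}\overline{h}$ and all covariances are unchanged, and right invariance uses $A=[\Ad(f^{-1})]$ as prescribed by \cref{cor:averaged_covariance}.

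The genuinely case-sensitive step is (iv). Here the hypothesis $\overline{g}=\overline{h}$ is exactly what makes all three covariances transform under the \emph{same} congruence $A=[\Ad(\overline{g})]$, by \cref{lem:centralized_covariance} together with the special case of \cref{cor:averaged_covariance}; without equal means, $\Sigma_{g_i}$ and $\Sigma_{h_i}$ would be conjugated by the distinct operators $\Ad(\overline{g})$ and $\Ad(\overline{h})$ and the cancellation would fail. Granting $\overline{g}=\overline{h}$, the Mahalanobis term vanishes because $\log(e)=0$ both before and after inversion, and the congruence identity finishes the log term. I expect this bookkeeping around the common-means assumption, rather than any analytic difficulty, to be the main point requiring care.
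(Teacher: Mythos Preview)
Your proposal is correct and follows essentially the same approach as the paper: reduce the first summand to the Hotelling argument via \cref{cor:averaged_covariance}, and handle the second summand by observing that a simultaneous congruence $[\Sigma]\mapsto A[\Sigma]A^T$ multiplies numerator and denominator by the same factor $\det(A)^2$, then instantiate $A$ as a change-of-basis matrix for (ii), as $[\Ad(f^{-1})]$ for (iii), and as $[\Ad(\overline{g})]$ under the hypothesis $\overline{g}=\overline{h}$ for (iv). The paper's proof is terser but structurally identical, including the remark that (ii) and (iv) are handled ``analogously'' to the right-translation case.
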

    \begin{proof}
        Note that we can focus on the second summand, since for the first all results can be shown as in the proof of \cref{thm:hotelling} (replacing the pooled by the averaged covariance).
        Thus, (i) follows immediately, since $\overline{\Sigma}$ is invariant under an exchange of the data sets.
        
        Next we show (iii). Let $f \in G$ and $\rho_f = \det(\Ad(f^{-1}))$; note that $\rho_f \ne 0$. Invariance of the second summand under left translations follows from \cref{lem:centralized_covariance,cor:averaged_covariance}. They further imply that
        \begin{align*}
            \frac{\det \left( \Big[ \overline{\Sigma} \bullet f \Big] \right)}{\sqrt{\det \left( \Big[\Sigma_{g_i} \bullet f \Big] \right) \det \left( \Big[\Sigma_{h_i} \bullet f \Big] \right)}} &= \frac{\rho_f^2 \det \left( \Big[\overline{\Sigma}\Big] \right)}{\rho_f^2 \sqrt{\det \left( \Big[\Sigma_{g_i} \Big] \right) \det \left( \Big[\Sigma_{h_i} \Big] \right)}},
        \end{align*}
        which yields invariance under right translations.
        The proofs of (ii) and (iv) work analogously (using \cref{cor:averaged_covariance} for (iv) and replacing group adjoints with basis transformations for (ii)). 
    \end{proof}
    
    Note that in \cref{def:Bhat_distance} we take determinants of matrix representations of (2,0)-tensors instead of (0,2)-tensors like in the Euclidean case (for background information see, e.g.,~\cite[Ch.\ 4]{doCarmo1992}). This does not make a difference since lowering indices with any (auxiliary) metric does not have an influence on the \textit{ratio} of the determinants in \cref{def:Bhat_distance}; the additional terms (i.e., the matrix representation of the metric) cancel out analogously to those introduced by translations (i.e., matrix representations of group adjoints) in the proofs of \cref{thm:hotelling,thm:properties_Bhatt}.
    
    \begin{remark}
        A notion from multivariate statistics that is intimately related to the Bhattacharyya distance is the \textit{Hellinger distance}
        $H$~\cite{Hellinger1909} (see also \cite[p.\ 51]{Pardo2005}):
        \begin{equation} \label{eq:hellinger_distance}
            H \big((p_i),(q_i) \big) = \sqrt{1 - 1 /\exp \big(\textnormal{D}_B \big((p_i), (q_i) \big) \big)}
        \end{equation}
        for data sets $(p_1,\dots,p_m)$ and $(q_1,\dots,q_n)$ in $\mathbb{R}^d$. It has found various applications: for example, in decision trees~\cite{Cieslak_ea2011_HellingerDD}, internet telephony~\cite{Sengar_ea2008}, and data visualization~\cite{Rao1995}.
        
        By using the bi-invariant Bhattacharyya distance in~\cref{eq:hellinger_distance}, we directly obtain a \textit{bi-invariant Hellinger distance} on Lie Groups with the same invariance properties.
    \end{remark}

\subsection{Connection of the Bi-invariant Bhattacharyya Distance to Densities}
    In the multivariate setting, the Bhattacharyya distance has a more general integral definition for distributions with probability density function (pdf). We show in this section that the bi-invariant Bhattacharyya distance is compatible with this view. Like in Euclidean space, we obtain it for Gaussian-like distributions from the generalized integral definition. On the other hand, in contrast to our definition, the integral version has the drawback that it is not bi-invariant in general.

    Let $P$ and $Q$ be two random variables on $\mathbb{R}^d$ with pdfs $p$ and $q$, respectively.
    The Bhattacharyya distance then takes the form
    $$\textnormal{D}_B(p,q) = -\ln \left( \int_{\mathbb{R}^d} \sqrt{pq}\, \dd x \right),$$
    which reduces to \cref{eq:multivariate_DB} for two normal distributions. In this section we investigate classes of Lie groups and distributions for which the above (or more precisely an analog) integral reduces to the bi-invariant Bhattacharyya distance.
    
    On a Lie group $G$ of dimension $d$, integrals of functions are defined via differential $d$-forms\footnote{Remember that~\cite[Sec.\ 1.3]{Helgason2001} on a smooth manifold $M$, a differential $d$-form $\omega$ smoothly assigns to each $p \in M$ an alternating $d$-linear map $\omega_p: (T_pM)^d \to \mathbb{R}$. If $\psi \in \mathscr{C}^{\infty}(M)$, then $\psi\, \omega$ is also a $d$-form on $M$.}; see, for example,~\cite[Ch.\ 26]{Postnikov2013}. For our purpose, it suffices to recount how one integrates over the domain of a single chart. (In general, integrals are defined via so-called partitions of unity.)
    
    Let $\Phi: V \to U$ be a diffeomorphism between $V \subseteq \mathbb{R}^d$ and a neighborhood $U \subseteq G$ (i.e., an inverse chart) and $\omega$ a $d$-form on $U$. Then, there is a unique $\varphi \in \mathscr{C}^{\infty}(\Phi(V))$ such that the pullback\footnote{Let $F:M \to N$ be a diffeomorphism of two smooth $d$-dimensional manifolds $M, N$ and $\omega$ a $d$-form on $N$. The pullback $F^*(\omega)$ of $\omega$ along $F$ is a $d$-form on $M$ defined by $F^*(\omega)(X_1,\dots,X_d) = \omega(\dd F(X_1),\dots,\ \dd F(X_d))$ for $X_1,\dots,X_d \in \Gamma(TM)$. Note that $F^*(\psi\, \omega) = (\psi \circ F)\, F^*(\omega)$ for all $\psi \in \mathscr{C}^{\infty}(N)$.} of $\omega$ along the diffeomorphism $\Phi$ is given by $\Phi^*(\omega) = \varphi\, \dd x$ ($\dd x$ being the ordinary volume form on $\mathbb{R}^d$). The integral of $\omega$ over $U$ is then defined by
    $$ \int_U \omega = \int_V \Phi^*(\omega) = \int_V \varphi\, \dd x$$
    if the right-hand side exists.
    
    In the affine setting there is no canonical volume form on $G$. Instead, we must choose a reference measure, that is, a nonzero left invariant $d$-form $\dd g$ on $G$ (a \textit{left Haar measure}) consistent with the orientation of $G$ (which we choose such that $\exp$ is orientation preserving). Luckily, a different choice of reference measure only leads to multiplying each integral with the same constant, which is why the actual choice of $\dd g$ depends (up to considerations of machine precision when working on a computer) only on one's taste. Thus, from now on we assume that $\dd g$ is arbitrary but fixed. Then, the integral of a function $\psi \in \mathscr{C}^{\infty}(U)$ is defined by 
    \begin{equation} \label{eq:integral_Lie_grou}
        \int_U \psi\, \dd g.
    \end{equation}
    
    This, in turn, allows to define pdfs on $G$ as (not necessarily smooth) functions $\mu \ge 0$ on $G$ such that $\int_G \mu\, \dd g = 1$ (see, e.g., \cite{Chirikjian2012}).
    Consequently, we get the integral version of the Bhattacharyya distance on Lie groups:
    \begin{equation} \label{def:integral_Bhattacharyya}
        \textnormal{D}_B^{\textnormal{int}}(\mu,\nu) = - \ln \left( \int_G \sqrt{\mu \nu}\, \dd g \right)
    \end{equation}
    for pdfs $\mu, \nu$ on $G$. 
    
    Note that, in general, \cref{eq:integral_Lie_grou} is left invariant (i.e., $\int_G \psi \circ L_h\, \dd g = \int_G \psi\, \dd g$ for each $h \in G$ and $\psi \in \mathscr{C}^{\infty}(G)$ whose integral exists) but not right invariant.\footnote{We have a similar problem when we start with a right-invariant volume form.} 
    Thus, in contrast to~\cref{def:Bhat_distance}, \cref{def:integral_Bhattacharyya} cannot generally be bi-invariant. For this to hold, we need a \textit{unimodular} Lie group, that is, a Lie group with bi-invariant volume form. (A well-known example of a non-unimodular Lie group is the group of invertible affine transformations of $\mathbb{R}^n$.) Therefore, we will only be interested in the unimodular case. Since then distributions can be ``moved around via translations'' without changing integrals, we can consider w.l.o.g.\ those centered at $e$. 
    
    In the following, let $E: \mathbb{R}^d \to T_eG$ be any orientation-preserving vector space isomorphism (which we obtain from choosing a basis of $T_eG$).
    Further, let $\mathfrak{e} \subseteq T_eG$ be a neighborhood of $0 \in T_eG$ such that \textit{first} $\exp|_{\mathfrak{e}}$ is a diffeomorphism onto the neighborhood $W=\exp(\mathfrak{e})$ of $e$, and \textit{second} there is an orientation-preserving diffeomorphism $\Phi : T_eG \to \mathfrak{e}$ between the whole of $T_eG$ and $\mathfrak{e}$. \cref{lem:cslcg_diff_Rd} shows that we can always choose $\mathfrak{e} := \log(W)$ where $W$ is any CSLCG neighborhood of $e$. 
    Now, we can define $\widetilde{\exp} = \exp \circ \Phi \circ E$ and $\widetilde{\log} = E^{-1} \circ \Phi^{-1} \circ \log$. Note that the latter is a chart of $W$ with image $\mathbb{R}^d$. 
    In order to identify densities on $G$ that connect both versions of the Bhattacharyya distance we first need the lemma below.
    \begin{lemma} \label{lem:push_forward_distribution}
        Let $G$ be a $d$-dimensional Lie group and $g \in G$. Further, let $p,q$ be pdfs on $\mathbb{R}^d$ and $\widetilde{\exp}^*(\dd g\big|_W) = \phi\, \dd x$ with $0 < \phi \in \mathscr{C}^{\infty}(\mathbb{R}^d)$. 
        Define 
        \begin{equation} \label{eq:mu}
        \mu(g) = \begin{cases} (p/\phi \circ \widetilde{\log}) (g), &  g \in W, \\ 0, & g \in G \setminus W, \end{cases}
        \end{equation}
        and
        \begin{equation} \label{eq:nu}
        \nu(g) = \begin{cases} (q/\phi \circ \widetilde{\log}) (g), &  g \in W, \\ 0, & g \in G \setminus W. \end{cases}
        \end{equation}
        Then, $\mu$ and $\nu$ are pdfs on $G$ with
        $$ \displaystyle \int_G \sqrt{\mu \nu}\, \dd g = \int_{\mathbb{R}^d} \sqrt{pq}\, \dd x.$$
    \end{lemma}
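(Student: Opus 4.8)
The plan is to reduce everything to the change-of-variables formula for the single inverse chart $\widetilde{\exp}: \mathbb{R}^d \to W$ recalled above. Since $\mu$ and $\nu$ vanish identically on $G \setminus W$ by construction, the product $\sqrt{\mu\nu}$ is supported in $W$, so each integral over $G$ appearing in the statement collapses to an integral over $W$. On $W$ I would then invoke the single-chart rule $\int_W \omega = \int_{\mathbb{R}^d} \widetilde{\exp}^*(\omega)$, applied in turn to the $d$-forms $\mu\, \dd g$, $\nu\, \dd g$, and $\sqrt{\mu\nu}\, \dd g$.

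The computational heart is the transformation rule $\int_W \psi\, \dd g = \int_{\mathbb{R}^d} (\psi \circ \widetilde{\exp})\, \phi\, \dd x$, which for smooth $\psi$ follows from the product rule for pullbacks, $\widetilde{\exp}^*(\psi\, \dd g) = (\psi \circ \widetilde{\exp})\, \widetilde{\exp}^*(\dd g) = (\psi \circ \widetilde{\exp})\, \phi\, \dd x$, together with the hypothesis $\widetilde{\exp}^*(\dd g|_W) = \phi\, \dd x$. The factor $1/\phi$ built into the definitions of $\mu$ and $\nu$ is exactly what cancels this Jacobian: using $\widetilde{\log} \circ \widetilde{\exp} = \mathrm{id}_{\mathbb{R}^d}$ gives $\mu \circ \widetilde{\exp} = p/\phi$ and $\nu \circ \widetilde{\exp} = q/\phi$ on $\mathbb{R}^d$, whence $(\mu \circ \widetilde{\exp})\, \phi = p$ and $(\nu \circ \widetilde{\exp})\, \phi = q$. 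This yields $\int_G \mu\, \dd g = \int_{\mathbb{R}^d} p\, \dd x = 1$ and the analogous statement for $\nu$; together with $\mu, \nu \ge 0$ (from $p, q \ge 0$ and $\phi > 0$) it shows both are pdfs on $G$. For the main identity the same recipe applies to the geometric mean: since $\sqrt{\mu\nu} \circ \widetilde{\exp} = \sqrt{(p/\phi)(q/\phi)} = \sqrt{pq}/\phi$, the Jacobian again cancels and $\int_G \sqrt{\mu\nu}\, \dd g = \int_{\mathbb{R}^d} \sqrt{pq}\, \dd x$.

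I do not expect a genuine obstacle; the argument is one transformation rule applied three times, and the definitions have been engineered so that the density factor $\phi$ cancels cleanly. The only points needing care are bookkeeping: confirming that the supports really restrict all integrals to $W$, that positivity of $\phi$ makes $p/\phi$ and $q/\phi$ everywhere well defined, that the transformation rule---though stated above for smooth coefficients---extends to the merely measurable pdfs $\mu,\nu$ by the classical change-of-variables theorem, and---the one genuinely structural observation---that $\sqrt{\mu\nu}$ transforms under $\widetilde{\exp}$ with the same single power of $\phi$ as $\mu$ and $\nu$ separately, which is exactly why the Euclidean integrand $\sqrt{pq}$ reappears unchanged.
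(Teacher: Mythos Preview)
Your proposal is correct and follows essentially the same approach as the paper's proof: restrict to $W$ using the support of $\mu,\nu$, apply the single-chart change-of-variables rule $\int_W \psi\,\dd g = \int_{\mathbb{R}^d} (\psi\circ\widetilde{\exp})\,\phi\,\dd x$, and use $\widetilde{\log}\circ\widetilde{\exp}=\mathrm{Id}_{\mathbb{R}^d}$ to cancel the factor $\phi$. The paper's proof is in fact terser than yours, compressing the whole argument into a single displayed chain of equalities.
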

    \begin{proof}
        Clearly, $\mu, \nu \ge 0$. Furthermore, since $\mu$ and $\nu$ are only supported in $W$ and $\widetilde{\log} \circ \widetilde{\exp} = \text{Id}_{\mathbb{R}^d}$, we find $\int_G \mu\, \dd g = \int_G \nu\, \dd g = 1$ and 
        \begin{align*}
            \int_G \sqrt{\mu \nu}\, \dd g &= \int_W \sqrt{\mu \nu}\, \dd g = \int_{\mathbb{R}^d} \widetilde{\exp}^*(\sqrt{\mu \nu}\, \dd g) =  \int_{\mathbb{R}^d} (\sqrt{\mu \nu} \circ \widetilde{\exp}) \phi\, \dd x =
            \int_{\mathbb{R}^d} \sqrt{p q}\, \dd x.
        \end{align*}
    \end{proof}
    We can write the function $\phi$ also more explicitly. Denoting $\dd y_i = \widetilde{\log}^*(\dd x_i)$, $i=1,\dots,d$, there is $0 < \psi \in \mathscr{C}^{\infty}(G)$ such that $\dd g \big|_W = \psi\, \dd y_1 \wedge \dots \wedge \dd y_d$\footnote{Here, $\wedge$ denotes the ``wedge product'' of differential forms. Any form of maximal degree $d$ can be written as the wedge product of $d$ coordinate 1-forms multiplied by a smooth function; see, e.g.,~\cite[Sec.\ 1.3]{Helgason2001}.}. Hence, the transformation rule for forms of maximal degree (i.e., the dimension of the underlying manifold) implies
    \begin{equation} \label{eq:pullback_Haar_measure}
        \widetilde{\exp}^*(\dd g \big|_W) = \det(\dd\, \widetilde{\exp}) (\psi \circ \widetilde{\exp})\, \dd x_1 \wedge \dots \wedge \dd x_d
    \end{equation}
    and, thus, $\phi = \det(\dd\, \widetilde{\exp}) (\psi \circ \widetilde{\exp})$.
    
    In the following, $\mathcal{N}(x,[\Sigma])$ denotes the multivariate normal distribution on $\mathbb{R}^d$ with mean $x$ and covariance matrix $[\Sigma]$. The following proposition establishes a connection between the bi-invariant Bhattacharyya distance and the integral version.
    \begin{proposition}
        Let $G$ be a $d$-dimensional unimodular Lie group with CCS connection, and let $p$ be the pdf of $\mathcal{N}(0,[\Sigma_{g_i}])$ and $q$ the pdf of $\mathcal{N}([\log(\overline{g}^{-1}\overline{h})],[\Sigma_{h_i}])$ on $\mathbb{R}^d$.
        From $p$ and $q$, define pdfs $\mu$ and $\nu$ on $G$ by \cref{eq:mu,eq:nu}. Further, let $(g_1,\dots,g_m)$ and $(h_1,\dots,h_n)$ be data sets in a CSLCG neighborhood $U \subseteq \exp(\mathfrak{e}) \subseteq G$ with group means $\overline{g}$ and $\overline{h}$, respectively.
        Then, 
        $$\textnormal{D}_B^{\textnormal{int}}(\mu,\nu) = \textnormal{D}_B \big((g_i), (h_i) \big).$$
    \end{proposition}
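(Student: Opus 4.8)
The plan is to let the change-of-variables lemma absorb all the Lie-theoretic machinery and thereby reduce the claim to the well-known Euclidean reduction of the Bhattacharyya distance for two Gaussians. First I would invoke \cref{lem:push_forward_distribution}, whose hypothesis $\widetilde{\exp}^*(\dd g\big|_W) = \phi\, \dd x$ holds here by \cref{eq:pullback_Haar_measure}, to rewrite the defining integral of \cref{def:integral_Bhattacharyya} over $G$ as an integral over $\mathbb{R}^d$:
\begin{equation*}
    \textnormal{D}_B^{\textnormal{int}}(\mu,\nu) = -\ln\left(\int_G \sqrt{\mu\nu}\,\dd g\right) = -\ln\left(\int_{\mathbb{R}^d}\sqrt{pq}\,\dd x\right).
\end{equation*}
The Jacobian factor $\phi$ coming from the Haar measure cancels against the normalizations built into $\mu$ and $\nu$, so after this step the right-hand side depends only on the Euclidean Gaussians $p$ and $q$ and the Lie group has disappeared from the computation.

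Next I would use the standard fact recorded in the paragraph introducing \cref{def:integral_Bhattacharyya}, namely that for two normal distributions the integral form reduces to the closed expression \cref{eq:multivariate_DB}. Substituting the prescribed parameters $\overline{p} = 0$, $\overline{q} = [\log(\overline{g}^{-1}\overline{h})]$, $S_{p_i} = [\Sigma_{g_i}]$, $S_{q_i} = [\Sigma_{h_i}]$, and noting that the averaged covariance matrix is exactly $\overline{S} = ([\Sigma_{g_i}] + [\Sigma_{h_i}])/2 = [\overline{\Sigma}]$ by \cref{def_averaged_covariance}, yields
\begin{equation*}
    -\ln\left(\int_{\mathbb{R}^d}\sqrt{pq}\,\dd x\right) = \frac{1}{8}\Big[\log(\overline{g}^{-1}\overline{h})\Big]^T\Big[\overline{\Sigma}\Big]^{-1}\Big[\log(\overline{g}^{-1}\overline{h})\Big] + \frac{1}{2}\ln\left(\frac{\det\left(\Big[\overline{\Sigma}\Big]\right)}{\sqrt{\det\left(\Big[\Sigma_{g_i}\Big]\right)\det\left(\Big[\Sigma_{h_i}\Big]\right)}}\right),
\end{equation*}
where the sign of the mean difference is immaterial because it enters only through a square.

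Finally I would identify the first summand with the Mahalanobis term of \cref{def:Bhat_distance}. Since $\Log_e(\overline{g}^{-1}\overline{h}) = \log(\overline{g}^{-1}\overline{h})$ by \cref{eq:CCS_logarithm}, the quadratic form above is precisely $\mu^2_{(e,\overline{\Sigma})}(\overline{g}^{-1}\overline{h})$ in the sense of \cref{eq:mahalanobis_distance}, while the logarithmic summand already matches the second term of \cref{def:Bhat_distance} verbatim. Together these give $\textnormal{D}_B^{\textnormal{int}}(\mu,\nu) = \textnormal{D}_B\big((g_i),(h_i)\big)$, completing the argument.

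I do not expect a genuine obstacle here, since \cref{lem:push_forward_distribution} does the real work and the remainder is an identification of parameters. The one point that warrants care is bookkeeping of the basis: one must ensure that the matrices $[\Sigma_{g_i}]$ and $[\Sigma_{h_i}]$ feeding the Euclidean formula are expressed in the same basis of $T_eG$ used to build $E$ (and hence $\widetilde{\exp}$ and $\widetilde{\log}$), so that the determinant ratio and the quadratic form transfer without stray factors. I would also remark that the hypothesis of unimodularity is not strictly needed for this particular equality—the lemma holds for any left Haar measure—but it is what guarantees that \cref{def:integral_Bhattacharyya} is bi-invariant in the first place, which is what makes the identification with the bi-invariant quantity \cref{def:Bhat_distance} meaningful.
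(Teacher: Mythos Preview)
Your proposal is correct and follows exactly the paper's approach: invoke \cref{lem:push_forward_distribution} to reduce $\int_G\sqrt{\mu\nu}\,\dd g$ to $\int_{\mathbb{R}^d}\sqrt{pq}\,\dd x$, then use the standard reduction of the Bhattacharyya distance for two Gaussians to the closed form~\cref{eq:multivariate_DB}. The paper's proof is essentially a one-line citation of these two facts; your version simply spells out the parameter identification with \cref{def:Bhat_distance} in more detail.
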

    \begin{proof}
    The assertion follows from~\cref{lem:push_forward_distribution} and the fact that the Bhattacharyya distance between two multivariate normal distributions is of the form~\cref{eq:multivariate_DB}.
    \end{proof}
    According to the proposition, ``scaled push-forwards'' of the multivariate normal distribution provide a connection between both versions of the Bhattacharyya distance. Densities connected to the these were investigated for SE($n$) in~\cite{ChevallierGuigui2020}---also with an emphasis on bi-invariance. Note that the requirement $U \subseteq \exp(\mathfrak{e})$ is not needed in the proof but is introduced to ensure that the data samples have non-vanishing probabilities under $p$ and $q$. If one wants to choose $p$ and $q$, for example, to build a model, setting $\mathfrak{e} = \log(U)$ seems sensible because thereby the locality condition that we impose on the data for $\textnormal{D}_B$ is met. On the other hand, larger domains could be used if ``outliers'' should be possible.
    An interesting special case are Lie groups whose exponentials are \emph{global} diffeomorphisms. Examples are simply connected, nilpotent Lie groups (like the Heisenberg group, which allows for a unique mean in the entire group~\cite[pp.\ 198--199]{pennec2019riemannian}); they are unimodular and the pullback of the Lebesgue measure under the logarithm is a bi-invariant Haar measure~\cite{Christ1992}. Thus, in a simply-connected, nilpotent Lie group, $\dd g$ can be chosen such that $\psi \equiv 1$ in \cref{eq:pullback_Haar_measure}.
    
\section{Experiments}
    In this section, we use the proposed indices for hypothesis tests in two application fields. Although we have chosen two specific contexts, the testing procedures described below can be used in many other application fields. 
    
    \subsection{Configurations of the Knee Joint under Osteoarthritis} \label{sec:knee_experiment}
        In the first part, we investigate configurations of the human knee joint under osteoarthritis (OA). The relative position of the femur with respect to the tibia is described by a rigid-body transformation. Using both the Hotelling $T^2$ statistic and Bhattacharyya distance in a two-sample test, we infer a significant difference between knee configurations of people with severe OA and healthy controls; we thus find the well-known joint space narrowing that is indicative for OA.
    
    \subsubsection{Data Description}
        The first data set, which was also used in~\cite{vonTycowicz_ea2018}, is derived from the Osteoarthritis
        Initiative\footnote{\url{https://nda.nih.gov/oai/}} (OAI). The OAI is a longitudinal study of knee osteoarthritis that provides (among others) publicly accessible clinical evaluation data and radiological images from 4,796 men and women of age 45--79. By means of a special support plate it was made sure that the patients' legs were constrained to the same posture. From the baseline data set (i.e., the images from the initial visits), we chose 58 severely diseased subjects and 58 healthy subjects according to their Kellgren-Lawrence score~\cite{Kellgren1957}. The sets were balanced in order to maximize the statistical power of our hypothesis test. For the 116 subjects, surfaces in correspondence of the distal femora and proximal tibiae were extracted from the respective 3D weDESS MR images (0.37×0.37 mm matrix, 0.7 mm slice thickness). The correspondence was then used to consistently extract (i.e., cut out) the condylar regions of the femura and tibiae; results from one subject can be seen on the left of \cref{fig:knee}.
        A supervised post-processing procedure was used to ensure the quality of the segmentations and the correspondence of the resulting meshes (8,988 vertices for each femur and 8,320 for each tibia).
        During the process, the relative positions of femur and tibia, which are present in the raw data through the global coordinate system of the scanner, are maintained. 
        
    \subsubsection{Encoding Relative Positions in SE(3)} \label{sec:encoding_SE3}
        
        \begin{figure*}[tb]
            \center       
            \includegraphics[width=0.90\linewidth]{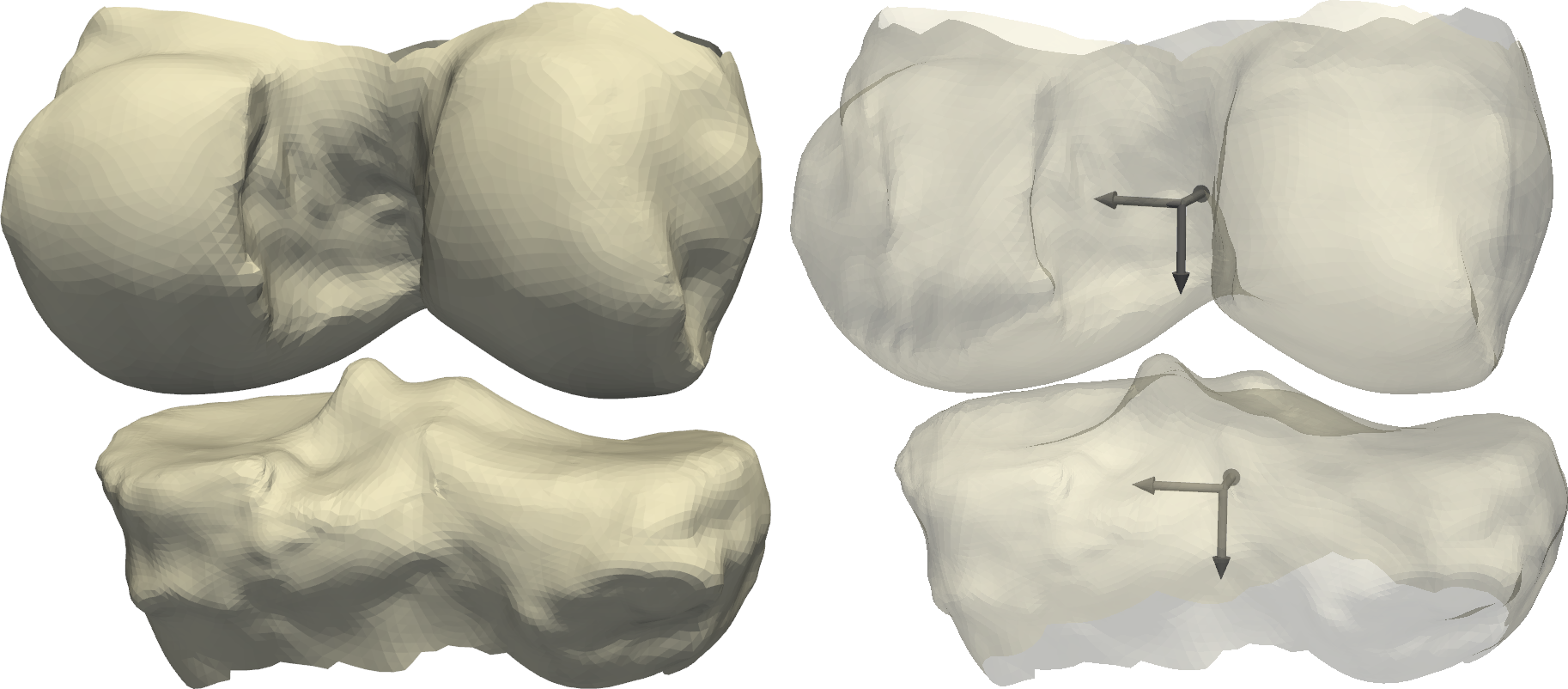}
            \caption{Visualization of the knee bones and the computed orthonormal frames. On the left, the distal femur (top) and the proximal tibia (below) are shown; the meshes (and their relative position) were reconstructed from OA data. The space between both (i.e., the joint space) is filled with cartilage, ligaments, and menisci (all not shown). On the right, we depict the orthonormal frames that correspond to both meshes.}
            \label{fig:knee} 
        \end{figure*}
    
        Remember that, with $O(n)$ denoting the group of orthogonal matrices in dimension $n$, the Euclidean group $\E(3) = \textnormal{O}(3) \ltimes \mathbb{R}^3$ is a semidirect product with group operation
        $(R, v)(Q, w) = (RQ, v+Rw)$ for $(R,v), (Q,w) \in \E(3)$. The group of rigid-body motions is the subgroup $\SE(3) = \SO(3) \ltimes \mathbb{R}^3$.
        
        Since for each femur-tibia pair the relative position is encoded in the coordinates of the meshes, we can capture it by calculating the rigid body transformation in SE(3) that moves the distal femur onto the proximal tibia. 
        More precisely, let $(O_f, o_f), (O_t, o_t) \in \E(3)$ be local orthonormal reference frames for the femur and tibia, respectively. (The vector part yields the origin, while the matrix columns represent the coordinate axes.) Then, after translating the meshes so that $o_{f}$ is moved to the origin (i.e., applying $(0, -o_{f}) \in \SE(3)$ to both frames),
        the rotation-translation pair $P = (R, v) \in \SE(3)$ that maps the reference frame $(O_f, o_f)$ of the femur to the reference frame $(O_t, o_t)$ of the tibia encodes the relative position of the bones. Now, since
        $$(O_t O_f^T, o_t - o_f)(O_f, 0) = (O_t, o_t - o_f),$$
        we have
        $$P = (R, v) = (O_t O_f^T, o_t - o_f).$$
        Now, in order to compute this we need to choose frames.
        We determine each reference frame from the principal component analysis of the vertices of the corresponding triangle mesh: While we pick the center of gravity as origin, the axes are chosen as unit vectors pointing along the principal directions; the latter is done in a consistent way throughout the population for both bone types; the resulting frames from one subject are visualized on the right of \cref{fig:knee}.
        
        Note that, in this context, left and right translations of $P$ with elements from SE(3) correspond to different choices of reference frames (with the same orientations) for the femur and tibia, respectively. 
        
        Finally, we want to point out that this approach of representing relative positions by elements of the special Euclidean group is also used in many other applications, for example, medical image analysis~\cite{Boisvert_ea2008}, robotics~\cite{ParkBobrowPloen1995}, human action recognition~\cite{vemulapalli2014human}, radar detection~\cite{Barbaresco2020,Cesic_ea2016}, and state description of molecules~\cite{BenoitHolmPutkaradze2011}. Thus, the permutation test outlined below can directly be applied to data from these domains.
        
    \subsubsection{Bi-invariant Permutation Test} \label{sec:permutation_test_single}
        We use a non-parametric permutation setup to test whether there is a difference between the knee configurations of people with OA and healthy controls. The reason is that we do not want to make assumptions about how both groups are distributed, but rather learn this from the data.
        After the 116 femur-tibia pairs have been processed as described in~\Cref{sec:encoding_SE3}, we obtain a set $(P_1^{(\textnormal{H})},\dots,P_{58}^{\textnormal{(H)}})$ in SE(3) of transformations derived from healthy controls and a set $(P_1^{\textnormal{(OA)}},\dots,P_{58}^{\textnormal{(OA)}})$ coming from patients with OA. We consider the null hypothesis $H_0$ of equal distributions underlying both groups, that is, $P^{\textnormal{(H)}} \stackrel{H_0}{\sim} P^{\textnormal{(OA)}}$.
        As test statistic $T$ we use (and thus compare) both the bi-invariant Hotelling $T^2$ statistic $t^2$ and the bi-invariant Bhattacharyya distance $\textnormal{D}_B$. 
        
        We initialize the test by computing the baseline 
        $$T_0 = T((P_i^{\textnormal{(H)}}), (P_i^{\textnormal{(OA)}})).$$
        Then, we perform 10,000 random permutations of the joint set
        $$(Z_1,\dots,Z_{116}) = \left( P_1^{\textnormal{(H)}},\dots,P_{58}^{\textnormal{(H)}}, P_1^{(\textnormal{OA})},\dots,P_{58}^{\textnormal{(OA)}} \right);$$
        that is, denoting the $l$-th permutation by $\sigma_l$, we compute the values
        $$T_l = T\big( (Z_{\sigma_l(1)},\dots,Z_{\sigma_l(58)}),(Z_{\sigma_l(59)},\dots,Z_{\sigma_l(116)}) \big), \quad l=1,\dots,10000.$$
        With $\boldsymbol{1}_{a\ge b}$ being $1$ if $a \ge b$ and 0 else,
        the $p$-value for the statistic $T$ is then given by 
        $$p_T= \frac{1}{10000} \sum_{l=1}^{10000} \boldsymbol{1}_{T_l \ge T_0};$$
        it is the proportion of test statistics that are greater than the one computed for the original (unpermuted) groups. A standard level to reject the null hypothesis is $p_T<0.05$; in this case, the difference between the distributions is called \emph{significant}.
        Performing the test with each statistic, we obtain 
        $$p_{t^2} \approx 0.00019 \quad \text{and} \quad p_{\textnormal{D}_B} < 10^{-5}.$$ 
        Thus, the differences are significant for both statistics. This result is in accordance with what is known about OA: Part of the disease is the so-called joint space narrowing---a decrease in distance between femur and tibia as consequence of degenerating cartilage. Note that we can deduce a posteriori that the sample size was sufficient to detect significant differences with an error probability smaller than $0.05$.
    
    \subsection{Hippocampi Shapes under Alzheimer's}
        In the second part, we analyze hippocampal atrophy patterns due to mild cognitive impairment (MCI) by applying and thereby comparing the derived Hotelling $T^2$ statistic and Bhattacharyya distance. MCI is a common condition in the elderly and often represents an intermediate stage between normal cognition and Alzheimer’s disease.
        As is consistently reported in neuroimaging studies, atrophy of the hippocampal formation is a characteristic early sign of MCI.
        Using both a local and global two-sample test, we infer significant differences in distribution of shapes of right hippocampi between a cognitive normal and the MCI group, in agreement with the literature.
    
    \subsubsection{Data Description}
    \label{sec:data}
    For our experiments we prepared a data set consisting of 26 subjects showing mild cognitive impairment (MCI) and 26 cognitive normal (CN) controls from the open access Alzheimer's Disease Neuroimaging Initiative\footnote{\url{http://adni.loni.usc.edu/}} (ADNI) database.
    ADNI provides, among others, 1632 brain magnetic resonance images collected on four different time points with segmented hippocampi.
    We established surface correspondence (2280 vertices, 4556 triangles) in a fully automatic manner employing the deblurring and denoising of functional maps approach~\cite{EzuzBenchen2017} for isosurfaces extracted from the available segmentations.
    The data set was randomly assembled from shapes for which segmentations were simply connected and remeshed surfaces were well-approximating ($\leq{}10^{-5}\;$mm root mean square surface distance to the isosurface).
    
\subsubsection{Lie Group-based Shape Space} \label{sec:shape_space}
    
    To model the shapes of the hippocampi, we use the representation from~\cite{AmbellanZachowvonTycowicz2019_GL3}. Given $n$ homogeneous objects in the form of triangle meshes $T_i \subset \mathbb{R}^3$ that are in correspondence and Procrustes aligned~\cite{Goodall1991}, their shapes are described in terms of (a power) of the Lie group $\Glp[3]$ of real 3-by-3 matrices with positive determinant.
    To obtain this representation, we view each mesh $T_i$ as the result of a deformation $\phi_i$ of a common reference $\overline{T} \subset \mathbb{R}^3$; that is, each $\phi_i: \overline{T} \to T_i$ is an orientation-preserving, simplicial isomorphism that yields a semantic correspondence. The underlying idea is that the derivatives of these deformations comprise the shape information, because they provide a local characterization of the deformation without being influenced by the location in space. Thus, the Jacobian $D\phi_i$ constitutes the shape representation of the $i$-th object. 
    Let $m$ be the (necessarily same) number of triangles of each mesh.
    Since each $D \phi_i$ is constant on the faces $F_j$ of $\overline{T}$, there are 3-by-3 matrices $G^{(i)}_j$ with $\textnormal{det}(G^{(i)}_j) >0$ such that 
    $D \phi_i \big|_{F_j} = G^{(i)}_j \in \Glp[3]$ for all $i=1,\dots,n,$ and  $j=1,\dots,m.$
    The $i$-th shape is thus given by
    $(G^{(i)}_1,\dots,G^{(i)}_m) \in \Glp[3]^m.$
    Note that the above $\textnormal{GL}^+(3)$ model stems from a continuous formulation~\cite[Sec.\ 2]{vonTycowicz_ea2018} that admits consistent and convergent discretizations.

    The task of obtaining a surface immersion whose Jacobian is closest to given differential coordinates leads to a variational problem.
    Its minimizer is given by the solution of the well-known Poisson equation for which fast numerical methods exist.
    Furthermore, as a global variational approach, the minimizer given by the Poisson equation tends to distribute errors uniformly such that local gradient field inconsistencies are attenuated.

    \subsubsection{Hippocampal Atrophy Patterns in CN vs.\ MCI}

    \begin{figure*}[tb]
        \center       
        \includegraphics[width=0.90\linewidth]{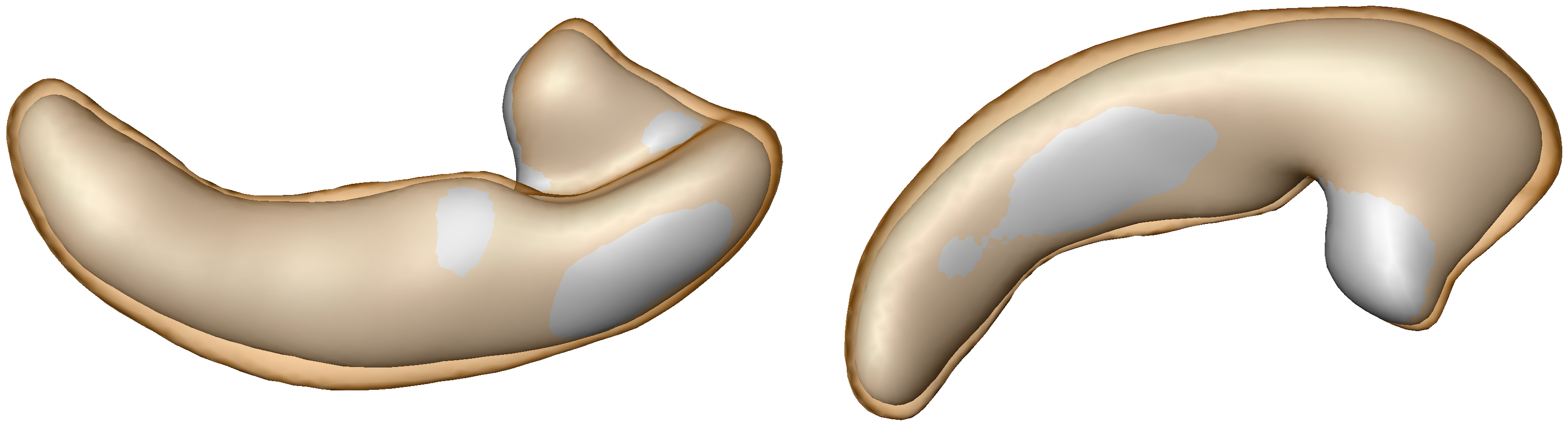}
        \caption{Group means of right hippocampi for cognitive normal (orange, transparent) and impaired (grey) subjects overlaid.}
        \label{fig:means} 
    \end{figure*}
    
    To test for a difference in distribution between the CN and MCI groups we extend the permutation test setup from~\Cref{sec:permutation_test_single} and perform a local (i.e., triangle-wise) and a global test. 
    After encoding the shapes as described in~\Cref{sec:shape_space}, the group means for the CN and MCI set are computed with the iterative algorithm from~\cite[Alg.\ 1]{PennecArsigny2013}. A well-known~\cite{Mueller_ea2010} phenomenon in patients that develop MCI is a loss of total hippocampal volume; it can be clearly observed in the qualitative comparison of both mean shapes shown in~\cref{fig:means}.
    
    \subsubsection{Local Test}
    In the following, when we speak of the $i$-th differential coordinate we mean the $i$-th element of the differential coordinates (vector). 
    For $i=1,\dots,4556$ let $G_i^{(\text{CN})}$ and $G_i^{\text{(MCI)}}$ denote the $i$-th differential coordinate of the mean hippocampi from the CN and MCI group, respectively. (We choose the first mesh of the CN group as reference object).
    In order to identify subregions that contribute to differences in mean shape between the groups, we perform triangle-wise, partial tests. For every triangle \textit{independently} we consider the null hypothesis $H_0$ that the distribution of its differential coordinate is the same for both groups, that is, 
    $G_i^{(\text{CN})} \stackrel{H_0}{\sim} G_i^{\text{(MCI)}}.$ 
    
    To test this hypothesis, we perform a permutation and compare again the bi-invariant Hotelling $T^2$ statistic and Bhattacharyya distance as test statistics.
    First, we perform independent tests for each triangle using the testing procedure from~\Cref{sec:permutation_test_single}. Therefore, denoting the test statistic by $T$ again, we start by computing 
    $$T^i_0 = T\big((G_i^{\text{(CN)}}), (G_i^{(\text{MCI})}) \big).$$ 
    We then perform 10,000 (random) permutations of the full set $$(Z_{i,1},\dots,Z_{i,52}) = \left( G_{i,1}^{\text{(CN)}},\dots,G_{i,26}^{\text{(CN)}},G_{i,1}^{\text{(MCI)}},\dots,G_{i,26}^{\text{(MCI)}} \right);$$
    here the second subscript enumerates the subjects of the groups. Then, we compute the statistics
    $$T^i_l = T\big( (Z_{i,\sigma_l(1)},\dots,Z_{i,\sigma_l(26)}),(Z_{i,\sigma_l(27)},\dots,Z_{i,\sigma_l(52)}) \big), \quad l=1,\dots,10000.$$
    The $p$-value of the $i$-th triangle for test statistic $T$ is then given by 
    $$p_T^{(i)} = \frac{1}{10000} \sum_{l=1}^{10000} \boldsymbol{1}_{T^i_l \ge T^i_0}.$$
    For each triangle, we reject $H_0$, if $p_T^{(i)} < 0.05$ and call the difference between both groups \textit{significant}.
    Using the bi-invariant Hotelling $T^2$ statistic and Bhattacharyya distance as test statistics, the described test is bi-invariant (in this case in~$\Glp[3]$). In particular, because of right-invariance, the results are independent from the reference that was chosen to compute the differential coordinates. (Left-invariance would allow to \textit{jointly} transform the differential coordinates of the subjects by elements from $\Glp[3]$, which, e.g., could be of interest for a better numerical performance of an algorithm.)
    
    Because of the large number of tests we apply Benjamini-Hochberg false discovery correction at the level $\alpha = 0.05$ to identify triangles with significant differences, for which we reject $H_0$.
    In~\cref{fig:p-values}, we visualize the triangles with significant differences, showing the respective $p$-values for both statistics. 
    
    In line with literature on MCI~\cite{Mueller_ea2010}, the obtained results suggest more differentiated morphometric changes beyond homogeneous volumetric decline of the hippocampi. The Bhattacharyya distance detects more significant differences, which is expected because it also takes covariances into account. Interestingly, there are also some areas where the Hotelling $T^2$ statistic seems to be more sensitive. This is probably the case when, for the Bhattacharyya distance, small but significant differences in the mean become insignificant due to largely coinciding covariances.
        
    \begin{figure*}[tb]
        \center       
        \includegraphics[width=0.90\linewidth]{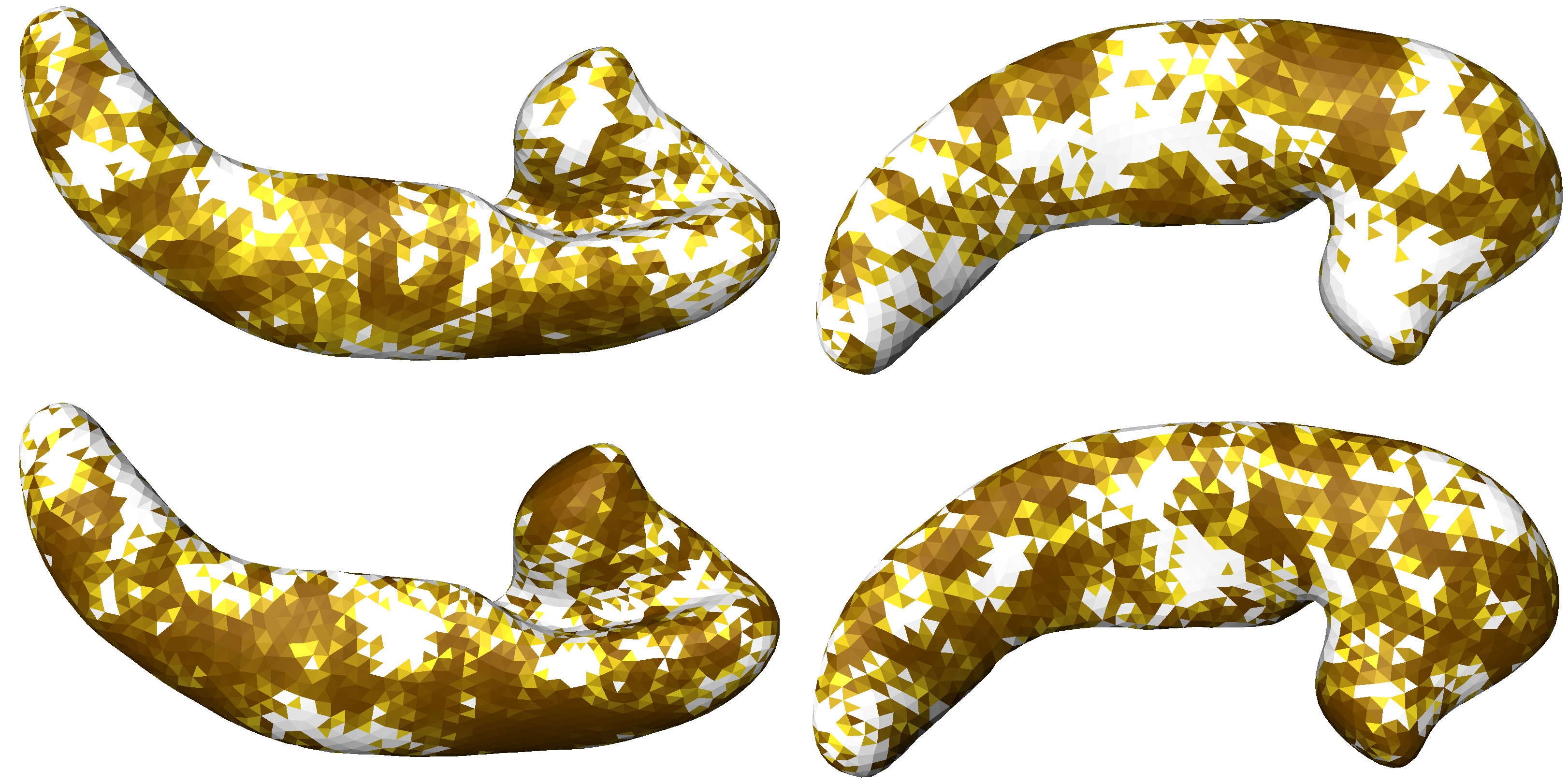}
        \caption{Group tests for differences in the distribution of right hippocampi for cognitive normal and impaired subjects. Results for the bi-invariant Hotelling $T^2$ statistic are shown at the top and for the Bhattacharyya distance at the bottom. Each triangle of the CN mean is color coded according to its $p$-value (FDR corrected) using the colormap 0.0 \ShowColormap{} 0.05.}
        \label{fig:p-values} 
    \end{figure*}
    
    \subsubsection{Global Test}
    Complementary to the local test is a global test that includes the overall pattern of the covariance structures thus being sensitive to spatial dependencies across the shape.
    In the construction of the test we follow the approach from~\cite{Schulz_ea2016}.
    
    For each triangle we first map the $T^i_l$ to an approximate uniform distribution in $[0,1]$ by applying the corresponding empirical cumulative distribution function (cdf) $C_i$. More precisely, we compute
    $$ C_i(T^i_l) = \frac{1}{10000} \sum_{r=1}^{10000} \boldsymbol{1}_{T^i_r \le T^i_l}, \quad i=1,\dots,4556, \quad l=0,\dots,10000.$$
    Then, setting $\widetilde{C}_i(T_l^i) = 0.9998\, C_i(T_l^i) - 0.00001$ and denoting the cdf of the standard normal distribution by $\phi$, the data is mapped to $U_l^i = \phi^{-1}(\widetilde{C}_i(T^i_l))$. The latter follow an approximate standard normal distribution for each triangle.

    Using the sample covariance matrix $\Sigma = 1/9999\,UU^T$ (where $U = [U_1,\dots,U_{10000}] = [U^i_l]$), a suitable test statistic is now given by the squared Mahalanobis distance; this yields
    \begin{equation} \label{eq:global_Mahalanobis}
    M_0 = U_0^T\Sigma^{-1} U_0,\quad M_l = U_l^T\Sigma^{-1} U_l, \quad l=1,\dots,10000.
    \end{equation}
    We then call the difference in global shape significant, when 
    $$p = \frac{1}{10000} \sum_{l=1}^{10000} \boldsymbol{1}_{M_l \ge M_0} < 0.05.$$
    In order to account for the irregularity in surface triangulations, the sample covariance operator and thus Mahalanobis distance in~\cref{eq:global_Mahalanobis} can be extended in terms of an adapted inner product that weights each component by the corresponding triangle's area.
    In this experiment, we adhere to the standard formulation~\cref{eq:global_Mahalanobis}, as the studied hippocampi surfaces are meshed uniformly.
    
    Applying this test to the values obtained from the local test yielded $p$-values smaller than $1/10^5$ for the bi-invariant Hotelling $T^2$ statistic and Bhattacharyya distance revealing that both are sensitive to the underlying morphological changes. 
    
\section{Discussion}
    In this work, we generalized established dissimilarity measures between sample distributions to Lie groups.
    These new indices are, in particular, invariant under translations, which yields analyses that are unbiased from the position of the data sets in the group. 
    As an application, we constructed nonparametric two-sample tests based on the proposed measures for both local and global hypothesis tests of shapes and used them for group tests on malformations of right hippocampi due to mild cognitive impairment.

    An interesting question, both theoretically and practically, is: Given a Lie group, to what extent is there smeariness of the group mean~\cite{EltznerHuckemann2019}? 
    An answer would give an indication on the size a data set should have for reliable results.
    
    In order for the Mahalanobis distances to be well-defined, the sample covariance operator needs to be invertible; this is frequently violated, most importantly when the number of observations is lower than the number of variables.
    A common approach in such situations is to resort to the pseudo-inverse of the covariance. Doing this in~\cref{eq:mahalanobis_distance}, however, will not result in a bi-invariant notion of Mahalanobis distance; the reason for this lies in the fact that taking pseudo-inverses of matrix products is in general more difficult than inverses.
    Since high dimensional data is common today, extending the proposed expressions to such scenarios poses another interesting direction for future work.

\appendix

\section{Existence and Uniqueness of the Group Mean} \label{app:existence_uniqueness}
    Let $G$ be a Lie Group with affine connection. If $U \subseteq G$ is a normal convex neighborhood and $g_1,\dots,g_m \in U$, then there always exists a group mean  $\overline{g} \in U$ \cite[Thm.\ 5.3]{PennecLorenzi2020}. 
    Nevertheless, it need not be unique. For the latter to be ensured we need a stronger notion of convexity that was proposed by Arnaudon and Li in \cite{ArnaudonLi2005}. In the following, we summarize their ideas; for more details, including examples, we refer to their article or \cite{PennecLorenzi2020} and the references therein.
    
    First, we need the following definitions. (Remember that a real-valued function on a smooth manifold is called convex, if its restriction to any geodesic is convex.)
    \begin{definition}
        A function $\rho: U \times U \to \mathbb{R}_{\ge 0}$ that is convex with respect to the product structure is called 
        \textnormal{separating function} if it vanishes on the diagonal of $U \times U$ and only there.
    \end{definition}
    \begin{definition}[$p$-convexity]
        Let $\rho$ and $d$ be a smooth separating function and an auxiliary Riemannian distance function on $U$, respectively. We say that $U$ has a \textnormal{$p$-convex geometry} if there are constants $c,C \in \mathbb{R}$ with $0 < c < C$ and an even integer $p \ge 2$ such that
        $$c d(f,g)^p \le \rho(f,g) \le C d(f,g)^p$$
        for all $f,g \in U$.
    \end{definition}
    For general smooth manifolds it is known that not all normal convex neighborhoods have a $p$-convex geometry. On the other hand, Whiteheads theorem ensures that each point in $G$ has a 2-convex neighborhood. Importantly, if the normal convex neighborhood $U$ has a $p$-convex geometry for any $p \in 2 \mathbb{N}$, then this is enough to ensure the uniqueness of the group mean.
    Nevertheless, there is a weaker condition that still yields uniqueness.
    \begin{definition}(CSLCG neighborhood~\cite{ArnaudonLi2005})
        We say that $U$ is \textnormal{convex with semilocal convex geometry (CSLCG)} if every compact subset $K \subset U$ has a relatively compact neighborhood $U_K$ with $p_K$-convex geometry for some $p_K \in 2\mathbb{N}$ depending on $K$.
    \end{definition}
    Observe that if $U$ has a $p$-convex geometry for some $p\in 2 \mathbb{N}$, then it is CSLCG. 
    We have the following result, which is a special case of \cite[Prop.\ 2.4]{ArnaudonLi2005}.
    \begin{proposition} \label{prop:existence_uniqueness_mean}
        Let $G$ be a Lie group with affine connection. Further, let $U \subseteq G$ be a CSLCG neighborhood and $g_1,\dots,g_m \in U$. Then, there exists a unique group mean $\overline{g} \in U$ of $g_1,\dots,g_m$.
    \end{proposition}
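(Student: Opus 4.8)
The plan is to view the group mean as an \emph{exponential barycenter}---a solution of $\sum_i \Log_{\overline{g}}(g_i)=0$---and to obtain existence and uniqueness as the finitely-supported specialization of the convex-analytic theory of \cite{ArnaudonLi2005}. Existence needs no new work: since $U$ is (in our usage) a normal convex neighborhood and $g_1,\dots,g_m\in U$, the existence of a group mean $\overline{g}\in U$ is precisely the result cited at the opening of this appendix, \cite[Thm.\ 5.3]{PennecLorenzi2020}. The entire content therefore lies in \emph{uniqueness}, and this is where the CSLCG hypothesis is indispensable.

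To prove uniqueness I would first localize. Assume two group means $b_1,b_2\in U$ are given and enlarge the (finite, hence compact) data set to $K=\{g_1,\dots,g_m,b_1,b_2\}\subseteq U$. By the CSLCG property, $K$ admits a relatively compact neighborhood $U_K$ carrying a $p_K$-convex geometry, i.e.\ a smooth separating function $\rho$ and an auxiliary Riemannian distance $d$ with $c\,d(f,g)^{p_K}\le\rho(f,g)\le C\,d(f,g)^{p_K}$ for all $f,g\in U_K$. The crucial tool is a Jensen-type inequality for exponential barycenters: for any geodesically convex function $f$ and any point $b$ with $\sum_i\Log_b(g_i)=0$, summing the first-order convexity estimates $f(g_i)\ge f(b)+\mathrm{d}f_b(\Log_b(g_i))$ and using that the logarithms sum to zero yields $f(b)\le\frac{1}{m}\sum_i f(g_i)$.

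With this in hand I would apply the inequality to the convex separating gauges $\rho(\,\cdot\,,b_2)$ at the barycenter $b_1$ and $\rho(\,\cdot\,,b_1)$ at the barycenter $b_2$, and then combine the two resulting estimates with the strict convexity that the two-sided bound $c\,d^{p_K}\le\rho\le C\,d^{p_K}$ forces along the geodesic joining $b_1$ and $b_2$. The goal is to squeeze the separating value $\rho(b_1,b_2)$ down to $0$; since $\rho$ vanishes only on the diagonal of $U_K\times U_K$, this gives $b_1=b_2$.

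The main obstacle is exactly this last reconciliation. The separating function $\rho$ is only \emph{comparable} to the auxiliary metric distance and is not adapted to the CCS connection $\nabla$; in particular, critical points of the convex energy $g\mapsto\sum_i\rho(g,g_i)$ are metric (Karcher-type) barycenters and need \emph{not} coincide with exponential barycenters of $\nabla$, so one cannot simply invoke uniqueness of a minimizer. The delicate point is to make the convexity and rigidity of $\rho$ bear directly on the connection's barycenter equation $\sum_i\Log_b(g_i)=0$ along $\nabla$-geodesics---controlling the mismatch between the metric gauge and the affine structure. Guaranteeing that the required strict-convexity estimates survive this passage is the technical heart of the argument and is precisely what the semilocal $p$-convexity encoded in CSLCG provides; formally it is the specialization of \cite[Prop.\ 2.4]{ArnaudonLi2005} to a uniform measure on $g_1,\dots,g_m$.
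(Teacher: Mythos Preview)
The paper does not actually give a proof of this proposition: it states the result and records that it ``is a special case of \cite[Prop.\ 2.4]{ArnaudonLi2005}.'' Your proposal lands in exactly the same place---after separating off existence via \cite[Thm.\ 5.3]{PennecLorenzi2020} you ultimately defer uniqueness to the same Arnaudon--Li proposition specialized to the uniform measure on $g_1,\dots,g_m$. In that sense your approach coincides with the paper's.

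What you add is a sketch of the mechanism behind Arnaudon--Li's argument (Jensen for exponential barycenters, then squeezing via the separating function). That sketch is broadly faithful, but note that your ``main obstacle'' paragraph overstates the difficulty in one respect: by definition the separating function $\rho$ is convex \emph{with respect to the product of the given affine connection}, i.e.\ along pairs of $\nabla$-geodesics, so the first-order Jensen estimate $f(g_i)\ge f(b)+\mathrm{d}f_b(\Log_b(g_i))$ already uses the correct (CCS) logarithm and there is no metric/affine mismatch at that step. The genuine subtlety in \cite{ArnaudonLi2005} is rather to turn mere convexity plus the two-sided $p$-bound into a \emph{strict} contraction that forces $\rho(b_1,b_2)=0$; your outline gestures at this but does not carry it out, which is fine given that the paper itself simply cites the result.
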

    
    The following lemma provides a property of CSLCG neighborhoods that is needed in this work.
    \begin{lemma} \label{lem:cslcg_diff_Rd}
        Let $G$ be a $d$-dimensional Lie group with affine connection and $U \subseteq G$ be a CSLCG neighborhood. Then, $U$ is diffeomorphic to $\mathbb{R}^d$.
    \end{lemma}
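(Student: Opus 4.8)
The plan is to transport the convex structure of $U$ into a single tangent space via the connection's logarithm, recognize the resulting set as star-shaped, and then invoke that an open star-shaped subset of $\mathbb{R}^d$ is diffeomorphic to $\mathbb{R}^d$. First I would fix an arbitrary basepoint $a \in U$. Since a CSLCG neighborhood is in particular a normal convex neighborhood, the exponential $\Exp_a$ of the connection restricts to a diffeomorphism from an open set $\Omega \subseteq T_aG$ onto $U$, with smooth inverse $\Log_a \colon U \to \Omega$. Choosing a basis of $T_aG$ yields a linear isomorphism $T_aG \cong \mathbb{R}^d$, so it suffices to prove that $\Omega$ is diffeomorphic to $\mathbb{R}^d$.

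Next I would verify that $\Omega = \Log_a(U)$ is star-shaped about the origin. For any $q \in U$, convexity of $U$ provides the unique connecting geodesic, which is exactly $t \mapsto \Exp_a\!\big(t\,\Log_a(q)\big)$ and lies entirely in $U$ for $t \in [0,1]$. Applying $\Log_a$ shows that the whole segment $\{\,t\,\Log_a(q) : t \in [0,1]\,\}$ is contained in $\Omega$. As $q$ ranges over $U$, this says precisely that $\Omega$ is an open neighborhood of $0$ that is star-shaped with respect to $0$.

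The final and main step is to show that every open star-shaped subset of $\mathbb{R}^d$ is diffeomorphic to $\mathbb{R}^d$. The hard part is that, in contrast to the convex case, the Minkowski gauge of a merely star-shaped open set need not be smooth, so the naive radial rescaling does not directly furnish a diffeomorphism and the radial structure must be smoothed with care; this is a known fact about star-shaped domains that I would cite. Alternatively---and this is where the full strength of the CSLCG hypothesis beyond mere convexity enters---I would build an exhaustion directly on $U$: the smooth separating function $\rho$ restricts to a geodesically convex function $\rho(a,\cdot)$ that vanishes only at $a$, so $a$ is its unique critical point (critical points of a convex function are global minima), while the estimate $c\,d(a,\cdot)^{p} \le \rho(a,\cdot)$ forces compact sublevel sets, i.e.\ properness. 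A smooth, proper, bounded-below function on a $d$-manifold whose only critical point is its minimum induces, via the negative gradient flow of the auxiliary metric, a diffeomorphism onto $\mathbb{R}^d$ by the standard Morse-theoretic argument. Two delicate points to watch along this second route are the passage from the \emph{semilocal} $p$-convex estimates to the global lower bound needed for properness, and the fact that for $p > 2$ the minimum is degenerate since $\rho \sim d^p$ has vanishing Hessian at $a$; the latter can be remedied by precomposing with a suitable convex increasing reparametrization of the values before running the flow, so that the bottom sublevel set is a smooth ball.
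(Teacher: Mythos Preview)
Your first route is exactly the paper's proof: pick a basepoint, note that normal convexity makes $\Log_a$ a diffeomorphism onto a star-shaped open subset of $T_aG \cong \mathbb{R}^d$, and then cite the fact that any open star-shaped domain in $\mathbb{R}^d$ is diffeomorphic to $\mathbb{R}^d$ (the paper invokes \cite[Thm.~5.1]{BottTu1982} for this). Your alternative Morse-theoretic route via the separating function is unnecessary here---the paper's argument uses only that $U$ is a normal convex neighborhood, not the full CSLCG structure---so you can drop it.
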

    \begin{proof}
        Let $g \in U$. Since $U$ is a normal convex neighborhood, $\Log_g(U) \subseteq T_gG$ is well-defined and star-shaped about $0 \in T_gG$. The claim now follows since any star-shaped domain in a $d$-dimensional vector space is diffeomorphic to $\mathbb{R}^d$~\cite[Thm.\ 5.1]{BottTu1982}.
    \end{proof}

\section{Hotelling's \texorpdfstring{{\boldmath$T^2$}}{T squared} statistic for Riemannian manifolds} \label{app:riemannian}
    In \cite[Sec.\ 3.3]{MuralidharanFletcher2012}, Muralidharan and Fletcher introduce a generalization of Hotelling's $T^2$ statistic to Riemannian manifolds $M$, i.e., for samples $(p_1,\dots,p_m)$, $(q_1,\dots,q_n)$ in $M$. The centers of the data sets are then given by the Fréchet means $\overline{p}, \overline{q} \in M$, respectively. Assuming that $\overline{p}, \overline{q}$ are unique, the difference between the means can be replaced by the Riemannian logarithms $v_{\overline{p}} = \Log_{\overline{p}} (\overline{q}) \in T_{\overline{p}}M$ or $v_{\overline{q}} = \Log_{\overline{q}} (\overline{p}) \in T_{\overline{q}}M$. Depending on the choice, the vectors are from different tangent spaces.
    Then, sample covariance matrices can be defined by
    \begin{align*}
        \textstyle
        \big[ W_{p_i} \big] = \frac{1}{m} \sum_{i=1}^m \big[\Log_{\overline{p}}(p_i) \big] \big[\Log_{\overline{p}}(p_i) \big]^T, \quad \big[W_{q_i} \big] = \frac{1}{n} \sum_{i=1}^n \big[\Log_{\overline{q}}(q_i) \big] \big[\Log_{\overline{q}}(q_i) \big]^T.
    \end{align*}
    Since there is no canonical way to compare vectors from different tangent spaces, Muralidharan and Fletcher propose to calculate a generalized $T^2$ statistic at both means and average the results. This leads to the generalized Hotelling $T^2$ statistic
    \begin{equation} \label{eq:Riemannian_test}
        \textstyle
        \mathsf{t}^2 \big((p_i),(q_i) \big) = \frac{1}{2} \left(\big[\Log_{\overline{p}}(\overline{q}) \big]^T \big[W_{p_i}^{-1} \big] \big[\Log_{\overline{p}}(\overline{q}) \big] + \big[\Log_{\overline{q}}(\overline{p}) \big]^T \big[W_{q_i}^{-1} \big] \big[\Log_{\overline{q}}(\overline{p}) \big] \right)
    \end{equation}
    for Riemannian manifolds.
    
    Note, however, that in the Euclidean case the statistics \cref{eq:Euclidean_test,eq:Riemannian_test} do not coincide. To see this, observe first that in this case $\Log_{\overline{p}}(\overline{q}) = - \Log_{\overline{q}}(\overline{p}) = \overline{q}- \overline{p}$ by the standard identification of tangent spaces. Hence, we get from \cref{eq:Riemannian_test} that
    \begin{equation*}
        \textstyle
        \mathsf{t}^2 \big((p_i),(q_i) \big) = (\overline{p} - \overline{q})^T  \frac{1}{2} \left(W_{p_i}^{-1} + W_{q_i}^{-1} \right) (\overline{p} - \overline{q}).
    \end{equation*}
    Thus, it is enough to show that $\widehat{S}^{-1} \ne 1/2 \left(W_{p_i}^{-1} + W_{q_i}^{-1} \right)$ in general for the pooled sample covariance $\widehat{S}$. But this is true---not only because of scaling---since for symmetric positive definite matrices $P_1, P_2$ (of the same size) we have $P_1^{-1} + P_2^{-1} \ne (P_1 + P_2)^{-1}$ in general (and any such matrix can be a pooled sample covariance as can be seen with the help of the singular value decomposition).

\section*{Acknowledgments}
    
    We are grateful for the open-access data sets of the Osteoarthritis Initiative (OAI)\footnote{Osteoarthritis Initiative is a public-private partnership comprised of five contracts (N01-AR-2-2258; N01-AR-2-2259; N01-AR-2-2260; N01-AR-2-2261; N01-AR-2-2262) funded by the National Institutes of Health, a branch of the Department of Health and Human Services, and conducted by the OAI Study Investigators. Private funding partners include Merck Research Laboratories; Novartis Pharmaceuticals Corporation, GlaxoSmithKline; and Pfizer, Inc. Private sector funding for the OAI is managed by the Foundation for the National Institutes of Health. This manuscript was prepared using an OAI public use data set and does not necessarily reflect the opinions or views of the OAI investigators, the NIH, or the private funding partners.} and the Alzheimer's Disease Neuroimaging Initiative (ADNI)\footnote{Data collection and sharing for this project was funded by the ADNI (National Institutes of Health Grant U01 AG024904) and DOD ADNI (Department of Defense award number W81XWH-12-2-0012). ADNI is funded by the National Institute on Aging, the National Institute of Biomedical Imaging and Bioengineering, and through generous contributions from the following: AbbVie, Alzheimer's Association; Alzheimer's Drug Discovery Foundation; Araclon Biotech; BioClinica, Inc.; Biogen; Bristol-Myers Squibb Company; CereSpir, Inc.; Cogstate; Eisai Inc.; Elan Pharmaceuticals, Inc.; Eli Lilly and Company; EuroImmun; F. Hoffmann-La Roche Ltd and its affiliated company Genentech, Inc.; Fujirebio; GE Healthcare; IXICO Ltd.; Janssen Alzheimer Immunotherapy Research \& Development, LLC.; Johnson \& Johnson Pharmaceutical Research \& Development LLC.; Lumosity; Lundbeck; Merck \& Co., Inc.; Meso Scale Diagnostics, LLC.; NeuroRx Research; Neurotrack Technologies; Novartis Pharmaceuticals Corporation; Pfizer Inc.; Piramal Imaging; Servier; Takeda Pharmaceutical Company; and Transition Therapeutics. The Canadian Institutes of Health Research is providing funds to support ADNI clinical sites in Canada. Private sector contributions are facilitated by the Foundation for the National Institutes of Health (www.fnih.org). The grantee organization is the Northern California Institute for Research and Education, and the study is coordinated by the Alzheimer's Therapeutic Research Institute at the University of Southern California. ADNI data are disseminated by the Laboratory for Neuro Imaging at the University of Southern California.}. 
    Furthermore, we are thankfull for F. Ambellan's help in establishing dense correspondences of the hippocampal surface meshes.

\bibliographystyle{siamplain}
\bibliography{bibliography}

\end{document}